\providecommand{\U}[1]{\protect\rule{.1in}{.1in}}
\newtheorem{theorem}{Theorem}
\newtheorem{corollary}[theorem]{Corollary}
\newenvironment{proof}[1][Proof]{\noindent\textbf{#1.} }{\ \rule{0.5em}{0.5em}}
\begin{document}

\title{Closed form solution for the surface area, the capacitance and the
demagnetizing factors of the ellipsoid}
\author{G. V. Kraniotis\thanks{Email: gkraniot@cc.uoi.gr}, G. K.
Leontaris\thanks{Emai: leonta@uoi.gr}\\University of Ioannina, Department of Physics, \\Section of Theoretical Physics,\\Ioannina GR- 451 10, Greece}
\maketitle

\begin{abstract}
We derive the closed form solutions for the surface area, the capacitance and
the demagnetizing factors of the ellipsoid immersed in the Euclidean space $%
%TCIMACRO{\U{211d} }%
%BeginExpansion
\mathbb{R}
%EndExpansion
^{3}$. The exact solutions \ for the above geometrical and physical properties
of the ellipsoid are expressed \ elegantly in terms of the generalized
hypergeometric functions of Appell of two variables. Various limiting cases of
the theorems of the exact solution for the surface area, the demagnetizing
factors and the capacitance of the ellipsoid are derived, which agree with
known solutions for the prolate and oblate spheroids and the sphere. Possible
applications of the results achieved, in various fields of science, such as in
physics, biology and space science are briefly discussed.

\end{abstract}

\section{Introduction}

An interesting and important problem of geometry and mathematical analysis is
the exact answer to the question: which is the surface area of the ellipsoid
immersed in the Euclidean space $%
%TCIMACRO{\U{211d} }%
%BeginExpansion
\mathbb{R}
%EndExpansion
^{3}$? Despite the simplicity of the question and the fact that the roots of
the problem can be traced back to the 19th Century there has been only a
partial progress towards its solution. This is because the closed form
solution had evaded the efforts of previous researchers and scholars. The
first serious investigation had been performed by Legendre who obtained an
equation for the surface area of the ellipsoid in terms of formal integrals
\cite{AMLEGENDRE}. At this point we note, that a nice and critical review of
the mathematical literature summarizing the attempts of various mathematicians
in solving the problem, from the period of Legendre till 2005, has been
written in \cite{TEE} (see for instance \cite{Keller} $\ $cited in
\cite{TEE}). There is also a practical interest for an exact solution for the
ellipsoidal surface area in various fields of science, we just mention a few
such fields: 1) in biology the human cornea as well as the chicken
erythrocytes are realistically described by an ellipsoid and the area is
important in the latter case for the determination of the permeabilities of
the cells \cite{KWOK},\cite{brian} 2) in cosmology and the physics of rotating
black holes \cite{Krasinski}and 3) in the geometry of hard ellipsoidal
molecules and their virial coefficients. In particular in the latter case, the
surface area appears in the expression for the pressure of the ellipsoidal
molecules \cite{Singh}. We also mention the relevance of the surface area of
ellipsoid for the investigation and measurement of capillary forces between
sediment particles and an air-water interface \cite{Chatterjee}. For an
application to medicine we refer the reader to \cite{Xu}.

On the other hand there are two further important aspects related to the
geometry of the ellipsoid awaiting for a full analytic solution with many
important applications. Namely:\textit{\ first} the calculation in closed
analytic form of the \textit{capacitance }of a conducting ellipsoid and
\textit{second} the exact analytic calculation of the \textit{demagnetizing
factors} of a magnetized ellipsoid.

In the former case, the geometry of the ellipsoid is complex enough to serve
as a promising avenue for modeling arbitrarily shaped conducting bodies
\cite{shumpert}. Capacitance modulation has been suggested recently as a
method of detecting microorganisms such as the E. coli \ present in the water
\cite{Dwivedi}. Despite its importance in theory and applications, no exact
analytic solution for the capacitance of the ellipsoid \ had been derived by
previous authors. There was only a formula in terms of formal integrals
derived in \cite{shumpert}.

In the later case, the magnetic susceptibility $\kappa$ of the body determined
in the ambient magnetic field $\vec{B}$ is influenced by the shape and
dimensions of the body. Thus the measured (apparent) magnetic susceptibility
$\kappa_{A}$ should be corrected for this shape effect to obtain the
shape-independent true susceptibility $\kappa_{T}.$ The relation between the
true and apparent volume susceptibility involves the so called demagnetizing
factors. The first attempts of calculating the demagnetizing factors \ of the
ellipsoid were made in \cite{Kellogg},\cite{OSBORN}. However, the authors of
these works only derived expressions in terms of formal integrals. \ In this
paper, we derive for the first time the closed form solution for the three
demagnetizing factors for the ellipsoid, in terms of the first hypergeometric
function of Appell of two variables. A fundamental application of our work
will be in the determination of asteroidal magnetic susceptibility and its
comparison to those of meteorites in order to establish a meteorite-asteroid
match \cite{Kohout}. Another interesting application of our solution for the
demagnetizing factors of the ellipsoid would be in the field of microrobots.
An external magnetic field can induce torque on a ferromagnetic body. Thus the
use of external magnetic fields has strong advantages in microrobotics and
biomedicine such as wireless controllability and safe use in clinical
applications \cite{tottori}.

Thus, there is a certain demand from pure and applied mathematics for the
closed form solutions of the above geometric problems. It is the purpose of
our paper to produce such novel and useful exact analytic solutions for all
three described problems above. We report our findings in what follows.

\section{Closed form solution for the surface area of the ellipsoid.}

We consider an ellipsoid centred at the coordinate origin, with rectangular
Cartesian coordinate axes along the semi-axes $a,b,c:$%
\begin{equation}
\frac{x^{2}}{a^{2}}+\frac{y^{2}}{b^{2}}+\frac{z^{2}}{c^{2}}%
=1.\label{ellipsoidgeniko}%
\end{equation}
We begin our exact analytic calculation for the infinitesimal surface area
\textrm{d}$S,$ using the formula for the surface $\mathrm{Monge}$ segment :%
\begin{equation}
\vec{x}(x,y)=(x,y,z(x,y)),
\end{equation}%
\begin{equation}
\mathrm{d}S=\left\vert \vec{x}_{x}\times\vec{x}_{y}\right\vert \mathrm{d}%
x\mathrm{d}y=\sqrt[2]{1+z_{x}^{2}+z_{y}^{2}}\mathrm{d}x\mathrm{d}%
y,\label{gensurfarea}%
\end{equation}
where $z_{x}:=\frac{\partial z(x.y)}{\partial x},$ $z_{y}:=\frac{\partial
z(x,y)}{\partial y}$ and$,$
\begin{align}
1+z_{x}^{2}+z_{y}^{2}  & =1+x^{2}\left(  \frac{c^{2}}{a^{2}}\right)  ^{2}%
\frac{1}{c^{2}\left(  1-\frac{x^{2}}{a^{2}}-\frac{y^{2}}{b^{2}}\right)
}+y^{2}\left(  \frac{c^{2}}{b^{2}}\right)  ^{2}\frac{1}{c^{2}\left(
1-\frac{x^{2}}{a^{2}}-\frac{y^{2}}{b^{2}}\right)  }\nonumber\\
& =1+x^{2}\left(  \frac{c^{2}}{a^{4}}\right)  \frac{1}{\left(  1-\frac{x^{2}%
}{a^{2}}-\frac{y^{2}}{b^{2}}\right)  }+y^{2}\left(  \frac{c^{2}}{b^{4}%
}\right)  \frac{1}{\left(  1-\frac{x^{2}}{a^{2}}-\frac{y^{2}}{b^{2}}\right)
}\nonumber\\
& =\frac{1-\frac{x^{2}}{a^{2}}-\frac{y^{2}}{b^{2}}+\frac{x^{2}c^{2}}{a^{4}%
}+\frac{y^{2}c^{2}}{b^{4}}}{1-\frac{x^{2}}{a^{2}}-\frac{y^{2}}{b^{2}}}%
=\frac{1-\left(  1-\frac{c^{2}}{a^{2}}\right)  \frac{x^{2}}{a^{2}}-\left(
1-\frac{c^{2}}{b^{2}}\right)  \frac{y^{2}}{b^{2}}}{1-\frac{x^{2}}{a^{2}}%
-\frac{y^{2}}{b^{2}}}\implies\nonumber\\
&
\end{align}
Substituting to $(\ref{gensurfarea})$ we get%
\begin{align}
\mathrm{d}S  & =\sqrt[2]{\frac{1-\left(  1-\frac{c^{2}}{a^{2}}\right)
\frac{x^{2}}{a^{2}}-\left(  1-\frac{c^{2}}{b^{2}}\right)  \frac{y^{2}}{b^{2}}%
}{1-\frac{x^{2}}{a^{2}}-\frac{y^{2}}{b^{2}}}}\mathrm{d}x\mathrm{d}y\nonumber\\
& =\sqrt[2]{\frac{1-\delta\frac{x^{2}}{a^{2}}-\varepsilon\frac{y^{2}}{b^{2}}%
}{1-\frac{x^{2}}{a^{2}}-\frac{y^{2}}{b^{2}}}}\mathrm{d}x\mathrm{d}y,
\end{align}
where we define:%
\begin{equation}
\delta:=1-c^{2}/a^{2},\text{ }\varepsilon:=1-c^{2}/b^{2}.
\end{equation}
Consequently the octant surface area is given by$:$%
\begin{align}
\mathcal{A}_{oct.}^{ellipsoid}  & =\int_{0}^{a}\left\{  \int_{0}%
^{b\sqrt[2]{1-x^{2}/a^{2}}}\sqrt[2]{\frac{1-\delta\frac{x^{2}}{a^{2}%
}-\varepsilon\frac{y^{2}}{b^{2}}}{1-\frac{x^{2}}{a^{2}}-\frac{y^{2}}{b^{2}}}%
}\mathrm{d}y\right\}  \mathrm{d}x\nonumber\\
& =\int_{0}^{a}\left\{  \int_{0}^{b\sqrt[2]{1-x^{2}/a^{2}}}\sqrt{\frac{\left(
1-\delta\frac{x^{2}}{a^{2}}\right)  \left[  1-\frac{\varepsilon}{b^{2}\left(
1-\delta\frac{x^{2}}{a^{2}}\right)  }y^{2}\right]  }{\left(  1-\frac{x^{2}%
}{a^{2}}\right)  \left[  1-\frac{1}{b^{2}}\frac{1}{1-\frac{x^{2}}{a^{2}}}%
y^{2}\right]  }}\mathrm{d}y\right\}  \mathrm{d}x\nonumber\\
& =\int_{0}^{a}\left\{  \int_{0}^{b\sqrt[2]{1-x^{2}/a^{2}}}\Omega\sqrt
{\frac{(1-\mu^{2}y^{2})}{(1-\lambda^{2}y^{2}}}\mathrm{d}y\right\}
\mathrm{d}x,
\end{align}
with%
\begin{equation}
\Omega:=\sqrt[2]{\frac{\left(  1-\delta\frac{x^{2}}{a^{2}}\right)  }{\left(
1-\frac{x^{2}}{a^{2}}\right)  }},\text{ }\mu^{2}:=\frac{\varepsilon}%
{b^{2}(1-\delta x^{2}/a^{2})},\lambda^{2}:=\frac{1}{b^{2}}\frac{1}%
{1-\frac{x^{2}}{a^{2}}}.
\end{equation}
We define a new variable:%
\begin{equation}
y_{1}:=\frac{y}{b\sqrt[2]{1-x^{2}/a^{2}}}\implies\mathrm{d}y_{1}%
=\frac{\mathrm{d}y}{\eta},\text{ with }\eta:=b\sqrt[2]{1-x^{2}/a^{2}}.
\end{equation}
Thus:%
\begin{align}
\mathcal{A}_{oct.}^{ellipsoid}  & =\int_{0}^{a}\left\{  \int_{0}^{1}%
\Omega(x)\sqrt{\frac{1-\mu^{\prime2}y_{1}^{2}}{1-\lambda^{\prime2}y_{1}^{2}}%
}\mathrm{d}y_{1}\eta\right\}  \mathrm{d}x\nonumber\\
& =\int_{0}^{a}\Omega(x)\eta(x)\left\{  \int_{0}^{1}\sqrt{\frac{1-\mu
^{\prime2}\psi}{1-\lambda^{\prime2}\psi}}\frac{\mathrm{d}\psi}{2\sqrt[2]{\psi
}}\right\}  \mathrm{d}x\implies
\end{align}%
%TCIMACRO{\TeXButton{EllipsoidS}{\begin{equation}
%\shadowbox{$\displaystyle\mathcal{A}_{o\kappa\tau.}^{ellipsoid}=\int
%_0 ^a\Omega(x)\eta(x)\frac{1}{2}\frac{\Gamma(1/2)\Gamma(1)}{\Gamma
%(3/2)}F_1\left(\frac{1}{2},-\frac{1}{2},\frac{1}{2},\frac{3}{2},\mu^{\prime
%2},\lambda^{\prime2}\right){\rm d}x,
%\label{KRANIOTISELLIPS} $}
%\end{equation}} }%
%BeginExpansion
\begin{equation}
\shadowbox{$\displaystyle\mathcal{A}_{o\kappa\tau.}^{ellipsoid}=\int
_0 ^a\Omega(x)\eta(x)\frac{1}{2}\frac{\Gamma(1/2)\Gamma(1)}{\Gamma
(3/2)}F_1\left(\frac{1}{2},-\frac{1}{2},\frac{1}{2},\frac{3}{2},\mu^{\prime
2},\lambda^{\prime2}\right){\rm d}x,
\label{KRANIOTISELLIPS} $}
\end{equation}
%EndExpansion
with
\begin{align}
\mu^{\prime2}  & :=\eta^{2}\mu^{2}=\frac{(1-x^{2}/a^{2})}{(1-\delta
x^{2}/a^{2})}(1-c^{2}/b^{2}),\\
\lambda^{\prime2}  & :=\eta^{2}\frac{1}{b^{2}(1-x^{2}/a^{2})}=\frac
{b^{2}(1-x^{2}/a^{2})}{b^{2}(1-x^{2}/a^{2})}=1
\end{align}
and $F_{1}(\alpha,\beta,\beta^{\prime},\gamma,x,y)$ denotes the first
generalized hypergeometric function of \textrm{Appell \cite{appell} }with two
variables $x,y$ and parameters $\alpha,\beta,\beta^{\prime},\gamma:$%
%TCIMACRO{\TeXButton{Appell}{\begin{equation}
%\shadowbox{$\displaystyle F_1(\alpha,\beta,\beta^{\prime},\gamma
%,x,y)=\sum_{m=0}^{\infty}\sum_{n=0}^{\infty}
%\frac{(\alpha,m+n)(\beta,m)(\beta^{\prime},n)}{(\gamma,m+n)(1,m)(1,n)}%
%x^m y^n.$}
%\label{AppellGVKellipsoid}
%\end{equation}} }%
%BeginExpansion
\begin{equation}
\shadowbox{$\displaystyle F_1(\alpha,\beta,\beta^{\prime},\gamma
,x,y)=\sum_{m=0}^{\infty}\sum_{n=0}^{\infty}
\frac{(\alpha,m+n)(\beta,m)(\beta^{\prime},n)}{(\gamma,m+n)(1,m)(1,n)}%
x^m y^n.$}
\label{AppellGVKellipsoid}
\end{equation}
%EndExpansion
The double series converges absolutely for $|x|<1,|y|<1.$

Thus we obtain $:$%
\begin{align}
\mathcal{A}_{oct.}^{ellipsoid}  & =\int_{0}^{a}b\sqrt[2]{1-\frac{\delta x^{2}%
}{a^{2}}}\frac{1}{2}\frac{\Gamma(1/2)\Gamma(1)}{\Gamma(3/2)}F_{1}\left(
\frac{1}{2},-\frac{1}{2},\frac{1}{2},\frac{3}{2},\mu^{\prime2},1\right)
\mathrm{d}x\nonumber\\
& =\int_{0}^{a}b\sqrt[2]{1-\frac{\delta x^{2}}{a^{2}}}\frac{1}{2}\frac
{\Gamma(1/2)\Gamma(1)}{\Gamma(3/2)}\frac{\Gamma(3/2)\Gamma(1/2)}%
{\Gamma(1)\Gamma(1)}F\left(  \frac{1}{2},-\frac{1}{2},1,\mu^{\prime2}\right)
\mathrm{d}x\nonumber\\
& =\int_{0}^{a}b\sqrt[2]{1-\frac{\delta x^{2}}{a^{2}}}\frac{\pi}{2}F\left(
\frac{1}{2},-\frac{1}{2},1,\frac{(1-x^{2}/a^{2})}{(1-\delta x^{2}/a^{2}%
)}(1-c^{2}/b^{2})\right)  \mathrm{d}x.\nonumber\\
&
\end{align}
In the transition from the first to the second line of the previous equation
we made use of the property of Appell's hypergeometric function according to
which if one of its two variables is set to the value $1$ (one), \ then the
function $F_{1}$ reduces to the ordinary hypergeometric function of Gau\ss :
\begin{equation}
F_{1}\left(  \frac{1}{2},-\frac{1}{2},\frac{1}{2},\frac{3}{2},\mu^{\prime
2},1\right)  =\frac{\Gamma(3/2)\Gamma(3/2-1/2-1/2)}{\Gamma(3/2-1/2)\Gamma
(3/2-1/2)}F\left(  \frac{1}{2},-\frac{1}{2},\frac{3}{2}-\frac{1}{2}%
,\mu^{\prime2}\right)  .
\end{equation}
It is also valid$:$%
\begin{equation}
\mathcal{A}_{oct.}^{ellipsoid}=ab\int_{0}^{1}\sqrt[2]{1-\delta x_{1}^{2}}%
\frac{\pi}{2}F\left(  \frac{1}{2},-\frac{1}{2},1,\frac{(1-x_{1}^{2}%
)}{(1-\delta x_{1}^{2})}(1-c^{2}/b^{2})\right)  \mathrm{d}x_{1}.
\end{equation}
We now apply the transformation$:$%
\begin{equation}
\frac{1-x_{1}^{2}}{1-\delta x_{1}^{2}}=1-\eta^{2}%
\end{equation}
which yields$:$%
\begin{align}
\mathcal{A}_{oct.}^{ellipsoid}  & =\frac{ab}{\sqrt[2]{1-\delta}}\int_{0}%
^{1}\frac{1}{\left[  1+\frac{\delta\eta^{2}}{1-\delta}\right]  ^{2}}\frac{\pi
}{2}F\left(  \frac{1}{2},-\frac{1}{2},1,(1-\eta^{2})\varepsilon\right)
\mathrm{d}\eta\nonumber\\
& =\frac{ab}{\sqrt[2]{1-\delta}}\int_{0}^{1}\sqrt[2]{1-\varepsilon\eta^{2}%
}\int_{0}^{\pi/2}\frac{1}{\left[  1+\frac{\delta(1-\eta^{2})}{1-\delta}%
\cos^{2}\phi\right]  ^{2}}\mathrm{d}\phi\mathrm{d}\eta\nonumber\\
& =\frac{ab}{\sqrt[2]{1-\delta}}\int_{0}^{1}\sqrt[2]{1-\varepsilon\eta^{2}%
}\frac{\pi}{4}\frac{2+\frac{\delta(1-\eta^{2})}{1-\delta}}{\left[
1+\frac{(\delta(1-\eta^{2}))}{1-\delta}\right]  ^{3/2}}\mathrm{d}\eta\implies
\end{align}
the total surface area is given by%
\begin{align}
\mathcal{A}^{ellipsoid}  & \mathcal{=}\frac{8ab}{\sqrt[2]{1-\delta}}%
%TCIMACRO{\TeXButton{TeX field}{\Biggl\{}}%
%BeginExpansion
\Biggl\{%
%EndExpansion
\int_{0}^{1}\sqrt[2]{1-\varepsilon\eta^{2}}\frac{\pi}{2}\frac{1}{\left\{
1+\left[  \frac{\delta(1-\eta^{2})}{1-\delta}\right]  \right\}  ^{3/2}%
}\mathrm{d}\eta\nonumber\\
& +\int_{0}^{1}-\frac{\pi}{4}\frac{-\delta(1-\eta^{2})}{1-\delta}%
\frac{\sqrt[2]{1-\varepsilon\eta^{2}}}{\left\{  1+\left[  \frac{\delta
(1-\eta^{2})}{1-\delta}\right]  \right\}  ^{3/2}}\mathrm{d}\eta%
%TCIMACRO{\TeXButton{TeX field}{\Biggr\}}}%
%BeginExpansion
\Biggr\}%
%EndExpansion
,
\end{align}
while using
\begin{equation}
1+\left[  \frac{\delta(1-\eta^{2})}{1-\delta}\right]  =\frac{1-\delta
+\delta(1-\eta^{2})}{1-\delta}=\frac{1}{1-\delta}(1-\delta\eta^{2}),
\end{equation}
we obtain$:$%
\begin{align}
\mathcal{A}^{ellipsoid}  & =\frac{8ab}{\sqrt[2]{1-\delta}}%
%TCIMACRO{\TeXButton{TeX field}{\Biggl\{}}%
%BeginExpansion
\Biggl\{%
%EndExpansion
\frac{\pi}{4}\frac{\Gamma(1/2)\Gamma(1)}{\Gamma(3/2)}\frac{1}{[1/(1-\delta
)]^{3/2}}F_{1}\left(  \frac{1}{2},\mathbf{\beta}_{\epsilon},\frac{3}%
{2},\varepsilon,\delta\right) \nonumber\\
& -\frac{\pi}{4}\frac{(-\delta)}{1-\delta}\frac{1}{[1/(1-\delta)]^{3/2}}%
\frac{1}{2}\frac{\Gamma(1/2)\Gamma(2)}{\Gamma(5/2)}F_{1}\left(  \frac{1}%
{2},\mathbf{\beta}_{\epsilon},\frac{5}{2},\varepsilon,\delta\right)
%TCIMACRO{\TeXButton{TeX field}{\Biggr\}}}%
%BeginExpansion
\Biggr\}%
%EndExpansion
\nonumber\\
& \label{Area51eLL13KRANIOTIS}%
\end{align}
and we defined the 2-tuple:%
\begin{equation}
\mathbf{\beta}_{\epsilon}:=\left(  -\frac{1}{2},\frac{3}{2}\right)  .
\end{equation}
Equation $(\ref{Area51eLL13KRANIOTIS})$ is our \underline{solution in closed
analytic form } for the surface area of the ellipsoid. We believe it
constitutes the first complete exact analytic solution of the problem, while
equation $(\ref{Area51eLL13KRANIOTIS})$ is of certain mathematical beauty.
Thus, we have proved the theorem$:$

\begin{theorem}
\label{GeorgiosVKraniotis} The surface area of the general ellipsoid in closed
analytic form is given by the equation$:$
\begin{align}
& \fbox{$\mathcal{A}_{scalene}^{ellipsoid}=\frac{8ab}{\sqrt[2]{1-\delta}}%
%TCIMACRO{\TeXButton{TeX field}{\Biggl\{}}%
%BeginExpansion
\Biggl\{%
%EndExpansion
\frac{\pi}{2}\frac{1}{[1/(1-\delta)]^{3/2}}F_{1}\left(  \frac{1}%
{2},\mathbf{\beta}_{\epsilon},\frac{3}{2},\varepsilon,\delta\right)
-\frac{\pi}{6}\frac{(-\delta)}{1-\delta}\frac{1}{[1/(1-\delta)]^{3/2}}%
F_{1}\left(  \frac{1}{2},\mathbf{\beta}_{\epsilon},\frac{5}{2},\varepsilon
,\delta\right)
%TCIMACRO{\TeXButton{TeX field}{\Biggr\}}}%
%BeginExpansion
\Biggr\}%
%EndExpansion
$}\Longleftrightarrow\nonumber\\
& \fbox{$\mathcal{A}_{scalene}^{ellipsoid}=4\pi ab\left(  \frac{c^{2}}{a^{2}%
}F_{1}(\frac{1}{2},-\frac{1}{2},\frac{3}{2},\frac{3}{2};\epsilon,\delta
)+\frac{1}{3}\left(  1-\frac{c^{2}}{a^{2}}\right)  F_{1}(\frac{1}{2},-\frac
{1}{2},\frac{3}{2},\frac{5}{2};\epsilon,\delta)\right)  $}%
\end{align}

\end{theorem}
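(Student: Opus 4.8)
The derivation preceding the statement already assembles the proof; what remains is to justify each reduction, and the engine throughout is the Euler-type integral representation of Appell's function,
\[
F_1(\alpha,\beta,\beta',\gamma;x,y)=\frac{\Gamma(\gamma)}{\Gamma(\alpha)\Gamma(\gamma-\alpha)}\int_0^1 t^{\alpha-1}(1-t)^{\gamma-\alpha-1}(1-xt)^{-\beta}(1-yt)^{-\beta'}\,\mathrm{d}t,
\]
valid for $\mathrm{Re}\,\gamma>\mathrm{Re}\,\alpha>0$ and consistent with the series (\ref{AppellGVKellipsoid}). First I would begin from the Monge element $\mathrm{d}S=\sqrt{1+z_x^2+z_y^2}\,\mathrm{d}x\,\mathrm{d}y$ with the value of $1+z_x^2+z_y^2$ computed above, and use the three reflection symmetries of the ellipsoid to write the total area as eight times the octant integral. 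After the rescaling $y_1=y/\eta$ and $\psi=y_1^2$, the inner integral is matched term by term against the representation with $\alpha=\tfrac12,\ \gamma=\tfrac32,\ \beta=-\tfrac12,\ \beta'=\tfrac12$, the condition $\gamma-\alpha-1=0$ killing the $(1-t)$ factor; this reproduces the prefactor $\tfrac12\Gamma(1/2)\Gamma(1)/\Gamma(3/2)$ and the argument pair $(\mu^{\prime 2},\lambda^{\prime 2})$ of (\ref{KRANIOTISELLIPS}).

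Because $\lambda^{\prime 2}\equiv 1$, I would collapse this Appell function to a Gauss function through the boundary reduction $F_1(\alpha,\beta,\beta',\gamma;x,1)=\tfrac{\Gamma(\gamma)\Gamma(\gamma-\alpha-\beta')}{\Gamma(\gamma-\alpha)\Gamma(\gamma-\beta')}F(\alpha,\beta,\gamma-\beta';x)$, which gives $F(\tfrac12,-\tfrac12,1;\mu^{\prime 2})$ and, once the Gammas cancel, the factor $\tfrac{\pi}{2}$. The octant area is then a single $x$-integral; rescaling to $x_1\in[0,1]$ and applying the substitution $(1-x_1^2)/(1-\delta x_1^2)=1-\eta^2$, whose Jacobian I would compute explicitly, turns the integrand into $(1-\delta)^{-1/2}[1+\delta\eta^2/(1-\delta)]^{-2}\,\tfrac{\pi}{2}F(\tfrac12,-\tfrac12,1;(1-\eta^2)\varepsilon)$.

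The decisive move is to eliminate this residual Gauss function. I would insert its elliptic-integral form $\tfrac{\pi}{2}F(\tfrac12,-\tfrac12,1;m)=\int_0^{\pi/2}\sqrt{1-m\sin^2\phi}\,\mathrm{d}\phi$, converting the area into a double integral over $(\eta,\phi)$, and reorganize the two variables so that the elliptic factor appears outside as $\sqrt{1-\varepsilon\eta^2}$ while the rational factor becomes the $\phi$-integrand $[1+\tfrac{\delta(1-\eta^2)}{1-\delta}\cos^2\phi]^{-2}$. The inner $\phi$-integral is then elementary, $\int_0^{\pi/2}(1+A\cos^2\phi)^{-2}\,\mathrm{d}\phi=\tfrac{\pi}{4}(2+A)(1+A)^{-3/2}$ with $A=\delta(1-\eta^2)/(1-\delta)$. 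Distributing $2+A$ over $(1+A)^{-3/2}$ and using $1+A=(1-\delta\eta^2)/(1-\delta)$ splits the outcome into two $\eta$-integrals whose integrands are proportional to $\sqrt{1-\varepsilon\eta^2}\,(1-\delta\eta^2)^{-3/2}$ and $(1-\eta^2)\sqrt{1-\varepsilon\eta^2}\,(1-\delta\eta^2)^{-3/2}$.

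With $t=\eta^2$, each of these two integrals is again of Euler type: the first matches the representation at $\gamma=\tfrac32$, and the second, whose extra $(1-\eta^2)$ supplies the $(1-t)^1$ factor, at $\gamma=\tfrac52$, both carrying $\beta_\epsilon=(-\tfrac12,\tfrac32)$ and arguments $(\varepsilon,\delta)$; performing them produces (\ref{Area51eLL13KRANIOTIS}). Finally, multiplying by $8$, simplifying the ratios $\Gamma(1/2)\Gamma(1)/\Gamma(3/2)=2$ and $\tfrac12\Gamma(1/2)\Gamma(2)/\Gamma(5/2)=\tfrac23$, and factoring $[1/(1-\delta)]^{3/2}$ yields the first boxed identity, and substituting $1-\delta=c^2/a^2$, $\delta=1-c^2/a^2$ gives the second. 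I expect the main obstacle to be exactly the reorganization in the third paragraph: arranging the elliptic and rational factors so that the residual $\eta$-integrals land \emph{exactly} on the Euler integrand with the required half-integer parameters $\gamma=\tfrac32,\tfrac52$, rather than merely being reducible to them. By contrast, the two applications of the integral representation and the boundary reduction are routine parameter-matching once the identities are in hand.
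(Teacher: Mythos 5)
Your proposal is correct and follows the paper's own derivation essentially step for step: the same Monge element and octant symmetry, the same Euler-integral matching at $(\alpha,\gamma)=(\tfrac{1}{2},\tfrac{3}{2})$ giving (\ref{KRANIOTISELLIPS}), the same boundary reduction at $\lambda'^{2}=1$ producing $\tfrac{\pi}{2}F(\tfrac12,-\tfrac12,1;\mu'^{2})$, the same substitution $(1-x_{1}^{2})/(1-\delta x_{1}^{2})=1-\eta^{2}$, the same closed form $\int_{0}^{\pi/2}(1+A\cos^{2}\phi)^{-2}\,\mathrm{d}\phi=\tfrac{\pi}{4}(2+A)(1+A)^{-3/2}$, and the same final split into two Euler integrals with $\gamma=\tfrac32$ and $\gamma=\tfrac52$. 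The "reorganization" you rightly flag as the one delicate step is exactly the step the paper also performs without comment, and it can be made rigorous by setting $\eta=\cos\theta$ so that both double integrals become integrals over the unit spherical octant with coordinates $(\sin\theta\sin\phi,\sin\theta\cos\phi,\cos\theta)$ and surface measure $\sin\theta\,\mathrm{d}\theta\,\mathrm{d}\phi$, whereupon the swap is a cyclic permutation of the three coordinates, which preserves the octant and its measure.
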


The two-variables function $F_{1}\left(  \alpha,\beta,\beta^{\prime}%
,\gamma,x,y\right)  ,$ admits the following integral representation which is
of vital importance in the proof of the theorem eqn.
$(\ref{Area51eLL13KRANIOTIS}),$ for the surface area of the general ellipsoid
$:$%

\begin{equation}%
%TCIMACRO{\dint \limits_{0}^{1}}%
%BeginExpansion
{\displaystyle\int\limits_{0}^{1}}
%EndExpansion
u^{\alpha-1}(1-u)^{\gamma-\alpha-1}(1-ux)^{-\beta}(1-uy)^{-\beta^{\prime}%
}du=\frac{\Gamma(\alpha)\Gamma(\gamma-\alpha)}{\Gamma(\gamma)}F_{1}\left(
\alpha,\beta,\beta^{\prime},\gamma,x,y\right)  \overset{}{}\label{INTREP}%
\end{equation}

We point out that in proving the theorem \ref{GeorgiosVKraniotis} we also
produced the following interesting result:

\begin{theorem}%
\begin{align}
& \int_{0}^{1}\sqrt[2]{1-\delta x_{1}^{2}}\frac{\pi}{2}F\left(  \frac{1}%
{2},-\frac{1}{2},1,\frac{(1-x_{1}^{2})}{(1-\delta x_{1}^{2})}(1-c^{2}%
/b^{2})\right)  \mathrm{d}x_{1}\nonumber\\
& =\frac{\pi}{2}(1-\delta)F_{1}\left(  \frac{1}{2},\mathbf{\beta}_{\epsilon
},\frac{3}{2},\varepsilon,\delta\right)  +\frac{\pi}{6}\delta F_{1}\left(
\frac{1}{2},\mathbf{\beta}_{\epsilon},\frac{5}{2},\varepsilon,\delta\right)  .
\end{align}

\end{theorem}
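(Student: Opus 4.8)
The plan is to recognize that the integral on the left-hand side is nothing other than the octant surface area computed in Section 2, rescaled. Comparing it with the displayed identity $\mathcal{A}_{oct.}^{ellipsoid}=ab\int_{0}^{1}\sqrt{1-\delta x_{1}^{2}}\,\frac{\pi}{2}F(\frac12,-\frac12,1,\cdot)\,\mathrm{d}x_{1}$ obtained just before Theorem \ref{GeorgiosVKraniotis}, the left-hand side equals $\mathcal{A}_{oct.}^{ellipsoid}/(ab)$. Since the full ellipsoid splits into eight congruent octants, one has $\mathcal{A}^{ellipsoid}=8ab\,I$, where $I$ denotes the integral in question. Because the closed form of $\mathcal{A}^{ellipsoid}$ is already supplied by Theorem \ref{GeorgiosVKraniotis}, the identity will follow at once after the prefactors are simplified.

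Concretely, I would start from the boxed expression of Theorem \ref{GeorgiosVKraniotis}, use $1/[1/(1-\delta)]^{3/2}=(1-\delta)^{3/2}$, and absorb the overall factor $1/\sqrt{1-\delta}$. The first summand then collapses to $8ab\,\frac{\pi}{2}(1-\delta)F_{1}(\frac12,\mathbf{\beta}_{\epsilon},\frac32,\varepsilon,\delta)$, while the second, after cancelling the sign and one power of $(1-\delta)$ against the factor $(-\delta)/(1-\delta)$, collapses to $8ab\,\frac{\pi}{6}\delta F_{1}(\frac12,\mathbf{\beta}_{\epsilon},\frac52,\varepsilon,\delta)$. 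Dividing through by $8ab$ reproduces the right-hand side of the claimed identity verbatim.

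For a self-contained derivation that does not invoke Theorem \ref{GeorgiosVKraniotis}, I would instead apply the substitution $\frac{1-x_{1}^{2}}{1-\delta x_{1}^{2}}=1-\eta^{2}$ directly to $I$, rewrite the Gauss function $F(\frac12,-\frac12,1,\cdot)$ as the angular integral $\int_{0}^{\pi/2}(1+A\cos^{2}\phi)^{-2}\,\mathrm{d}\phi$ used in Section 2, and perform the elementary $\phi$-integration $\int_{0}^{\pi/2}(1+A\cos^{2}\phi)^{-2}\,\mathrm{d}\phi=\frac{\pi}{4}(2+A)/(1+A)^{3/2}$ with $A=\delta(1-\eta^{2})/(1-\delta)$. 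Writing $1+A=(1-\delta\eta^{2})/(1-\delta)$ and splitting $2+A$ into its two pieces turns $I$ into two $\eta$-integrals. The decisive step is then the change of variable $u=\eta^{2}$, which casts each integral into the Euler-type form of the representation $(\ref{INTREP})$: the piece without the extra $(1-\eta^{2})$ factor matches $(\ref{INTREP})$ with $\alpha=\frac12,\ \gamma=\frac32$ (so that the exponent $\gamma-\alpha-1$ vanishes), while the piece carrying $(1-\eta^{2})$ matches with $\alpha=\frac12,\ \gamma=\frac52$; in both cases $\beta=-\frac12,\ \beta'=\frac32$ and the two variables are $\varepsilon$ and $\delta$.

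The main obstacle is precisely this parameter-matching: one must read off $\gamma$ from the power of $(1-\eta^{2})=(1-u)$ appearing in the integrand and correctly pair the surviving factors $(1-\varepsilon u)^{1/2}$ and $(1-\delta u)^{-3/2}$ with $(1-ux)^{-\beta}$ and $(1-uy)^{-\beta'}$, then simplify the Gamma-factor ratios $\Gamma(1/2)\Gamma(1)/\Gamma(3/2)=2$ and $\Gamma(1/2)\Gamma(2)/\Gamma(5/2)=4/3$ produced by $(\ref{INTREP})$. Once these are in hand, the accumulated constants reassemble into exactly $\frac{\pi}{2}(1-\delta)$ and $\frac{\pi}{6}\delta$, and the claimed identity is established.
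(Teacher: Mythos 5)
Your first route is circular within the paper's own logic: Theorem \ref{GeorgiosVKraniotis} is obtained in the paper precisely by evaluating the integral in question (the paper explicitly says the present statement was ``produced in proving'' that theorem), so dividing the boxed area formula by $8ab$ merely restates the claim to be proved; your prefactor algebra ($1/[1/(1-\delta)]^{3/2}=(1-\delta)^{3/2}$, the sign cancellation giving $\frac{\pi}{6}\delta$) is correct, but it assumes the conclusion. The burden therefore falls on your ``self-contained'' second route, which is indeed the paper's actual proof --- and there you assert a false identity at the pivotal step. You propose to ``rewrite the Gauss function $F\left(\frac12,-\frac12,1,\cdot\right)$ as the angular integral $\int_0^{\pi/2}(1+A\cos^{2}\phi)^{-2}\,\mathrm{d}\phi$''. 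This cannot hold: $\frac{\pi}{2}F\left(\frac12,-\frac12,1,z\right)=\int_0^{\pi/2}\sqrt{1-z\sin^{2}\theta}\,\mathrm{d}\theta$ is a complete elliptic integral of the second kind, non-elementary in $z$, whereas $\int_0^{\pi/2}(1+A\cos^{2}\phi)^{-2}\mathrm{d}\phi=\frac{\pi}{4}(2+A)(1+A)^{-3/2}$ is elementary. After the substitution $\frac{1-x_{1}^{2}}{1-\delta x_{1}^{2}}=1-\eta^{2}$ the integrand is $\left[1+\frac{\delta\eta^{2}}{1-\delta}\right]^{-2}\frac{\pi}{2}F\left(\frac12,-\frac12,1,\varepsilon(1-\eta^{2})\right)$, i.e.\ the Gauss function carries $\varepsilon(1-\eta^{2})$ and the \emph{algebraic} factor carries $\delta\eta^{2}$; the paper's next displayed line, $\sqrt{1-\varepsilon\eta^{2}}\int_0^{\pi/2}\left[1+\frac{\delta(1-\eta^{2})}{1-\delta}\cos^{2}\phi\right]^{-2}\mathrm{d}\phi$, has the roles of $\eta^{2}$ and $1-\eta^{2}$ exchanged between the two factors. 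That exchange is an equality of \emph{integrals}, not of integrands (at $\varepsilon=0$, $\delta=\frac12$, $\eta=0$ the two integrands are $\frac{\pi}{2}$ versus $\frac{3\pi}{8\sqrt{2}}$), and it holds because both iterated integrals come from the same octant double integral with the $x$- and $y$-integrations performed in opposite order. The paper glosses this interchange; your version replaces it with a pointwise identity that is simply wrong, so carrying out your prescription literally produces an incorrect integrand.

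Everything downstream in your proposal is correct and coincides with the paper: the elementary $\phi$-integration, $1+\frac{\delta(1-\eta^{2})}{1-\delta}=\frac{1-\delta\eta^{2}}{1-\delta}$, the split of $2+A$, the substitution $u=\eta^{2}$, and the matching with (\ref{INTREP}) at $\alpha=\frac12$, $\beta=-\frac12$, $\beta'=\frac32$, with $\gamma=\frac32$ for the piece without $(1-\eta^{2})$ and $\gamma=\frac52$ for the piece carrying it (read off from $(1-u)^{\gamma-\alpha-1}$), together with $\Gamma(\frac12)\Gamma(1)/\Gamma(\frac32)=2$ and $\Gamma(\frac12)\Gamma(2)/\Gamma(\frac52)=\frac43$, reassembling into $\frac{\pi}{2}(1-\delta)$ and $\frac{\pi}{6}\delta$. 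The claimed identity itself is true (it checks against closed forms at $\varepsilon=0$ and $\delta=0$). To close the gap you must justify the interchange step --- e.g.\ return to the octant double integral and integrate in the other order (Fubini), which produces the $\sqrt{1-\varepsilon\eta^{2}}$ form directly --- rather than asserting an elementary closed form for the non-elementary function $F\left(\frac12,-\frac12,1,\cdot\right)$.
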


$%
%TCIMACRO{\TeXButton{PORISMATA}{\shadowbox{$\color{blue}%
%Corollaries\;of\;Theorem\;\ref{GeorgiosVKraniotis}.$}}}%
%BeginExpansion
\shadowbox{$\color{blue}Corollaries\;of\;Theorem\;\ref{GeorgiosVKraniotis}.$}%
%EndExpansion
$

A few special cases follow$.$ In the case: $a=b\neq c$ the two variables of
the hypergeometric function of Appell that appear in
$(\ref{Area51eLL13KRANIOTIS})$ become equal and consequently $F_{1}$ reduces
to the hypergeometric function of Gau\ss :%
\begin{equation}
F_{1}\left(  \frac{1}{2},-\frac{1}{2},\frac{3}{2},\frac{3}{2},1-\frac{c^{2}%
}{a^{2}},1-\frac{c^{2}}{a^{2}}\right)  =F\left(  \frac{1}{2},-\frac{1}%
{2}+\frac{3}{2},\frac{3}{2},1-\frac{c^{2}}{a^{2}}\right)
\end{equation}
and $(\ref{Area51eLL13KRANIOTIS})$ takes the form$:$%
\begin{align}
\mathcal{A}_{a=b\neq c}^{\varepsilon\lambda\lambda\varepsilon\iota\psi
o\varepsilon\iota\delta\acute{\epsilon}\varsigma}  & =\frac{8a^{2}}%
{\sqrt[2]{1-\delta}}%
%TCIMACRO{\TeXButton{TeX field}{\Biggl\{}}%
%BeginExpansion
\Biggl\{%
%EndExpansion
\frac{\pi}{4}2\frac{1}{[1/(1-\delta)]^{3/2}}F\left(  \frac{1}{2},1,\frac{3}%
{2},1-\frac{c^{2}}{a^{2}}\right) \nonumber\\
& +-\frac{\pi}{4}\frac{(-\delta)}{1-\delta}\frac{1}{[1/(1-\delta)]^{3/2}}%
\frac{1}{2}\frac{4}{3}F\left(  \frac{1}{2},1,\frac{5}{2},1-\frac{c^{2}}{a^{2}%
}\right)
%TCIMACRO{\TeXButton{TeX field}{\Biggr\}}}%
%BeginExpansion
\Biggr\}%
%EndExpansion
\nonumber\\
& \label{OblateEllip}%
\end{align}
Equation $(\ref{OblateEllip})$ admits a further simplification$:$

\begin{corollary}
In the special case of an ellipsoid with $a=b>c$ ( oblate spheroid$)$ the
following equation is valid$:$%
\begin{align}
\mathcal{A}_{a=b\neq c}^{\varepsilon\lambda\lambda\varepsilon\iota\psi
o\varepsilon\iota\delta\acute{\epsilon}\varsigma}  & =\frac{8a^{2}}%
{\sqrt[2]{1-\delta}}%
%TCIMACRO{\TeXButton{TeX field}{\Biggl\{}}%
%BeginExpansion
\Biggl\{%
%EndExpansion
\frac{\pi}{4}2\frac{1}{[1/(1-\delta)]^{3/2}}F\left(  \frac{1}{2},1,\frac{3}%
{2},1-\frac{c^{2}}{a^{2}}\right) \nonumber\\
& +-\frac{\pi}{4}\frac{(-\delta)}{1-\delta}\frac{1}{[1/(1-\delta)]^{3/2}}%
\frac{1}{2}\frac{4}{3}F\left(  \frac{1}{2},1,\frac{5}{2},1-\frac{c^{2}}{a^{2}%
}\right)
%TCIMACRO{\TeXButton{TeX field}{\Biggr\}}}%
%BeginExpansion
\Biggr\}%
%EndExpansion
\nonumber\\
& =2\pi a^{2}+\pi c^{2}\frac{1}{\sqrt[2]{1-c^{2}/a^{2}}}\log\frac
{1+\sqrt[2]{1-c^{2}/a^{2}}}{1-\sqrt[2]{1-c^{2}/a^{2}}}%
.\nonumber\label{ESPOBLATE}\\
&
\end{align}

\end{corollary}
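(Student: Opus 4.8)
The plan is to start not from the general Theorem~\ref{GeorgiosVKraniotis} but from its already specialised form, equation~(\ref{OblateEllip}), in which the Appell function $F_1$ has collapsed to the Gauss function $F$ because its two arguments coincide when $a=b$. The only remaining work is then to evaluate the two ordinary hypergeometric functions $F(\tfrac12,1,\tfrac32,\delta)$ and $F(\tfrac12,1,\tfrac52,\delta)$ in closed form and to tidy the algebraic prefactors. Since $\delta=1-c^{2}/a^{2}$, I would first record $1-\delta=c^{2}/a^{2}$, $\sqrt{1-\delta}=c/a$ and $[1/(1-\delta)]^{3/2}=(1-\delta)^{-3/2}$, and carefully collect the powers of $(1-\delta)$ multiplying each term. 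After this bookkeeping the whole brace in~(\ref{OblateEllip}) should reduce to the compact combination $4\pi a^{2}(1-\delta)F(\tfrac12,1,\tfrac32,\delta)+\tfrac43\pi a^{2}\,\delta\,F(\tfrac12,1,\tfrac52,\delta)$.

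Next I would evaluate the two Gauss functions directly from their defining series. Because the parameter $\beta=1$ produces $(1,n)=n!$, which cancels the factorial in the denominator, each series collapses to a ratio of Pochhammer symbols that telescopes: one finds $(\tfrac12,n)/(\tfrac32,n)=1/(2n+1)$ and $(\tfrac12,n)/(\tfrac52,n)=3/[(2n+1)(2n+3)]$. Hence, writing $e:=\sqrt{\delta}=\sqrt{1-c^{2}/a^{2}}$, the first function becomes the elementary sum $\sum_{n\ge0}\delta^{n}/(2n+1)=e^{-1}\,\mathrm{arctanh}(e)=\tfrac{1}{2e}\log\frac{1+e}{1-e}$. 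For the second, the partial-fraction split $1/[(2n+1)(2n+3)]=\tfrac12[1/(2n+1)-1/(2n+3)]$ re-expresses $F(\tfrac12,1,\tfrac52,\delta)$ through the same $\mathrm{arctanh}$ plus the leftover shifted piece, giving $\tfrac32\,e^{-3}\big[(e^{2}-1)\mathrm{arctanh}(e)+e\big]$.

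Finally I would substitute these two closed forms into the reduced combination. The $\tfrac32$-term contributes a coefficient $4\pi a^{2}(1-\delta)e^{-1}$ of $\mathrm{arctanh}(e)$, while the $\tfrac52$-term contributes $2\pi a^{2}(e^{2}-1)e^{-1}\mathrm{arctanh}(e)=-2\pi a^{2}(1-\delta)e^{-1}\mathrm{arctanh}(e)$ together with a pure constant $2\pi a^{2}$ from its $+e$ piece. The two $\mathrm{arctanh}$ contributions partially cancel, leaving $2\pi a^{2}(1-\delta)e^{-1}\mathrm{arctanh}(e)+2\pi a^{2}$; using $a^{2}(1-\delta)=c^{2}$ and $\mathrm{arctanh}(e)=\tfrac12\log\frac{1+e}{1-e}$ then reproduces the stated formula. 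The main obstacle is nothing conceptual but the organised bookkeeping of the $(1-\delta)$ powers together with the sign cancellation between the two $\mathrm{arctanh}$ terms; once the two series identities are in hand the rest is routine. As a sanity check I would note that the outcome coincides with the classical oblate-spheroid area $2\pi a^{2}\big(1+\tfrac{1-e^{2}}{e}\mathrm{arctanh}(e)\big)$, confirming the reduction.
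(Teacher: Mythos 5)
Your proposal is correct, and it differs from the paper's proof precisely at the one step where real work is needed. Both you and the paper start from the specialised form (\ref{OblateEllip}) and both reduce the problem to the closed-form evaluation of $F(\tfrac12,1,\tfrac32,\delta)$ and $F(\tfrac12,1,\tfrac52,\delta)$ with $\delta=1-c^{2}/a^{2}$; your prefactor bookkeeping, which collapses the brace to $4\pi a^{2}(1-\delta)F(\tfrac12,1,\tfrac32,\delta)+\tfrac{4}{3}\pi a^{2}\,\delta\,F(\tfrac12,1,\tfrac52,\delta)$, is exactly what the paper's substitution amounts to. The difference is how the two Gauss functions are evaluated: the paper quotes Thom\'e's identity (\ref{tomea1}) to obtain (\ref{GVKTOMAE}) and then gets (\ref{GVKTOM13}) from the contiguous relation (\ref{thome2}), which requires differentiating the logarithmic closed form, whereas you sum both series directly --- with $\beta=1$ the factor $(1,n)$ cancels the $n!$, your telescoping ratios $(\tfrac12)_{n}/(\tfrac32)_{n}=1/(2n+1)$ and $(\tfrac12)_{n}/(\tfrac52)_{n}=3/[(2n+1)(2n+3)]$ are correct, and the partial-fraction split reduces everything to the $\mathrm{arctanh}$ series $\sum_{n\geq0}e^{2n+1}/(2n+1)$. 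Your resulting closed forms agree with (\ref{GVKTOMAE}) and (\ref{GVKTOM13}) upon setting $e=\sqrt{\delta}$, $1-e^{2}=c^{2}/a^{2}$ and $\mathrm{arctanh}\,e=\tfrac12\log\frac{1+e}{1-e}$, and the sign cancellation you describe between the two $\mathrm{arctanh}$ contributions does leave exactly $2\pi a^{2}+\pi c^{2}(1-c^{2}/a^{2})^{-1/2}\log\frac{1+\sqrt{1-c^{2}/a^{2}}}{1-\sqrt{1-c^{2}/a^{2}}}$. What each approach buys: yours is self-contained and elementary --- it re-proves Thom\'e's formula rather than citing it and avoids derivative-based contiguous machinery --- but it exploits the special value $\beta=1$ and would not extend as readily; the paper's route is shorter given the cited identities and reuses the contiguous-relation toolkit ((\ref{thome2}), (\ref{contgauss})) that it also needs for the prolate-spheroid and demagnetizing-factor corollaries.
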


\begin{proof}
We are going to make use of the formula \cite{Thome} $:$%
\begin{equation}
\log\frac{1+x}{1-x}=2xF\left(  \frac{1}{2},1,\frac{3}{2},x^{2}\right)
,\label{tomea1}%
\end{equation}
and the contiguous equation:%
\begin{equation}
F(\alpha,\beta,\gamma+1,z)=\frac{\gamma}{(\gamma-\alpha)(\gamma-\beta)}\left(
(1-z)\frac{\mathrm{d}}{\mathrm{d}z}+\gamma-\alpha-\beta\right)  F(\alpha
,\beta,\gamma,z).\label{thome2}%
\end{equation}
Equation $(\ref{tomea1})$ yields$:$%
\begin{equation}
\fbox{$F\left(  \frac{1}{2},1,\frac{3}{2},1-\frac{c^{2}}{a^{2}}\right)
=\frac{1}{2\text{ }\sqrt[2]{1-c^{2}/a^{2}}}\log\frac{1+\sqrt[2]{1-c^{2}/a^{2}%
}}{1-\sqrt[2]{1-c^{2}/a^{2}}},\label{GVKTOMAE}$}%
\end{equation}
while $(\ref{thome2})$%
\begin{equation}
\fbox{$F\left(  \frac{1}{2},1,\frac{5}{2},1-\frac{c^{2}}{a^{2}}\right)
=\frac{3}{2}\left\{  \frac{1}{1-c^{2}/a^{2}}-\frac{1}{2}\frac{c^{2}/a^{2}%
}{\left(  1-\frac{c^{2}}{a^{2}}\right)  ^{3/2}}\log\frac{1+\sqrt[2]%
{1-c^{2}/a^{2}}}{1-\sqrt[2]{1-c^{2}/a^{2}}}\right\}  .\label{GVKTOM13}$}%
\end{equation}
Substituting equations $(\ref{GVKTOMAE}),(\ref{GVKTOM13})$ into
$(\ref{OblateEllip})$ the corollary is proved.
\end{proof}

\begin{corollary}
In the special case of an ellipsoid with $a>b=c$ (prolate spheroid$)$ it
holds$:$%
\begin{align}
\mathcal{A}_{a\neq b=c}^{ellipsoid}  & =\frac{8a^{2}}{\sqrt[2]{1-\delta}}%
%TCIMACRO{\TeXButton{TeX field}{\Biggl\{}}%
%BeginExpansion
\Biggl\{%
%EndExpansion
\frac{\pi}{4}2\frac{1}{[1/(1-\delta)]^{3/2}}F\left(  \frac{1}{2},\frac{3}%
{2},\frac{3}{2},1-\frac{c^{2}}{a^{2}}\right) \nonumber\\
& +-\frac{\pi}{4}\frac{(-\delta)}{1-\delta}\frac{1}{[1/(1-\delta)]^{3/2}}%
\frac{1}{2}\frac{4}{3}F\left(  \frac{1}{2},\frac{3}{2},\frac{5}{2}%
,1-\frac{c^{2}}{a^{2}}\right)
%TCIMACRO{\TeXButton{TeX field}{\Biggr\}}}%
%BeginExpansion
\Biggr\}%
%EndExpansion
\nonumber\\
& =2ab\pi\left(  \sqrt[2]{\frac{b^{2}}{a^{2}}}+\frac{\sqrt[2]{(1-b^{2}/a^{2}%
)}}{1-b^{2}/a^{2}}\arcsin\sqrt[2]{1-b^{2}/a^{2}}\right)  .\nonumber
\end{align}

\end{corollary}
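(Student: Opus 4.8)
The plan is to specialize the general formula of Theorem~\ref{GeorgiosVKraniotis} to the prolate case $a>b=c$ and then collapse the resulting Appell functions to elementary expressions. First I would observe that $b=c$ forces $\varepsilon=1-c^{2}/b^{2}=0$, while $\delta=1-c^{2}/a^{2}=1-b^{2}/a^{2}$ survives. Hence the first argument of each $F_{1}$ in Theorem~\ref{GeorgiosVKraniotis} vanishes, and because the double series~(\ref{AppellGVKellipsoid}) keeps only its $m=0$ terms when its first variable is $0$, each Appell function degenerates to a single Gauss series:
\[
F_{1}\!\left(\tfrac12,-\tfrac12,\tfrac32,\gamma,0,\delta\right)=F\!\left(\tfrac12,\tfrac32,\gamma,\delta\right),\qquad \gamma\in\{\tfrac32,\tfrac52\}.
\]
This reproduces the intermediate expression stated in the corollary, involving the two Gauss functions $F(\tfrac12,\tfrac32,\tfrac32,\delta)$ and $F(\tfrac12,\tfrac32,\tfrac52,\delta)$ (recall $1-c^{2}/a^{2}=\delta$ here).

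The heart of the proof is the closed-form evaluation of these two functions. For the first I would use the elementary identity $F(a,b,b,z)=(1-z)^{-a}$, immediate from the series since the $(b,m)$ factors cancel, giving $F(\tfrac12,\tfrac32,\tfrac32,\delta)=(1-\delta)^{-1/2}$. The second function is the real obstacle, because the contiguous relation~(\ref{thome2}) used in the oblate corollary is unavailable here: it would demand dividing by $\gamma-\beta=\tfrac32-\tfrac32=0$. I would instead apply the Euler integral representation of the Gauss function, which turns $F(\tfrac12,\tfrac32,\tfrac52,\delta)$ into $\tfrac32\int_{0}^{1}\sqrt{t}\,(1-\delta t)^{-1/2}\,\mathrm{d}t$; the substitution $\sqrt{\delta}\,\sqrt{t}=\sin\theta$ reduces this to an elementary trigonometric integral and yields
\[
F\!\left(\tfrac12,\tfrac32,\tfrac52,\delta\right)=\frac{3}{2}\,\frac{\arcsin\sqrt{\delta}-\sqrt{\delta(1-\delta)}}{\delta^{3/2}},
\]
which is the origin of the inverse sine in the final answer.

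Finally I would insert both closed forms into the intermediate expression and simplify. The factor $(1-\delta)^{3/2}$ combines with $(1-\delta)^{-1/2}$ in the first term and with $\delta^{-3/2}$ in the second, and the $\sqrt{\delta(1-\delta)}$ piece of the second term cancels exactly half of the first, leaving $\frac{\pi}{4}(1-\delta)+\frac{\pi}{4}\frac{\sqrt{1-\delta}}{\sqrt{\delta}}\arcsin\sqrt{\delta}$ inside the braces. Multiplying by the outer factor and using $1-\delta=c^{2}/a^{2}=b^{2}/a^{2}$, so that $\sqrt{1-\delta}=b/a$ and $\sqrt{\delta}=\sqrt{1-b^{2}/a^{2}}$, collapses the expression to $2\pi ab\bigl(\sqrt{b^{2}/a^{2}}+\frac{\sqrt{1-b^{2}/a^{2}}}{1-b^{2}/a^{2}}\arcsin\sqrt{1-b^{2}/a^{2}}\bigr)$, as claimed. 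I expect the two delicate points to be the closed-form evaluation of $F(\tfrac12,\tfrac32,\tfrac52,\delta)$ and the bookkeeping of the half-integer powers of $\delta$ and $1-\delta$ in the final cancellation; for the prefactor to come out as $2\pi ab$ one must carry the overall constant as $8ab/\sqrt{1-\delta}$ (as in Theorem~\ref{GeorgiosVKraniotis}) rather than $8a^{2}/\sqrt{1-\delta}$.
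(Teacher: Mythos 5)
Your proof is correct and shares the paper's overall skeleton --- specialize Theorem \ref{GeorgiosVKraniotis} to $\varepsilon=0$, collapse each Appell function to a Gauss function via $F_{1}(\alpha,\beta,\beta',\gamma,0,\delta)=F(\alpha,\beta',\gamma,\delta)$, evaluate the two resulting Gauss functions in closed form, and simplify --- but it takes a genuinely different route at the crucial step. The paper obtains $F(\frac{1}{2},\frac{3}{2},\frac{3}{2},x)=(1-x)^{-1/2}$ and $F(\frac{1}{2},\frac{3}{2},\frac{5}{2},x)$ by starting from the $\arcsin$ formula (\ref{EIDIKOS}) and applying the two contiguous relations displayed beside it, arriving at (\ref{Contiguous1}); you instead get the first from the elementary binomial identity $F(a,b,b,z)=(1-z)^{-a}$ and the second directly from the Euler integral representation, $F\left(\frac{1}{2},\frac{3}{2},\frac{5}{2},\delta\right)=\frac{3}{2}\int_{0}^{1}\sqrt{t}\,(1-\delta t)^{-1/2}\,\mathrm{d}t$, reduced by the substitution $\sin\theta=\sqrt{\delta t}$ to $\frac{3}{2}\delta^{-3/2}\left(\arcsin\sqrt{\delta}-\sqrt{\delta(1-\delta)}\right)$, which agrees exactly with (\ref{Contiguous1}). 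Your route is more self-contained --- no contiguous-relation bookkeeping, and it stays close to the integral representation (\ref{INTREP}) that powers the rest of the paper --- while the paper's route recycles the hypergeometric machinery already set up for the oblate corollary; the two evaluations coincide, and your final bookkeeping checks out: the $-\frac{\pi}{4}(1-\delta)$ produced by the second term cancels half of the $\frac{\pi}{2}(1-\delta)$ from the first, and $1-\delta=b^{2}/a^{2}$ then yields $2\pi b^{2}+2\pi ab\,\delta^{-1/2}\arcsin\sqrt{\delta}$, i.e.\ the stated closed form. Two further observations of yours are accurate and worth keeping: relation (\ref{thome2}) is indeed unusable here since $\gamma-\beta=\frac{3}{2}-\frac{3}{2}=0$ (which is precisely why the paper switches to the other pair of contiguous relations in this corollary), and the prefactor $8a^{2}/\sqrt{1-\delta}$ in the corollary's displayed intermediate expression is a slip carried over from the oblate case $a=b$: since here $a>b=c$, Theorem \ref{GeorgiosVKraniotis} requires $8ab/\sqrt{1-\delta}$, without which the final line would fail.
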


\begin{proof}
Indeed$,$ for $b=c,$ the first variable of the Appell's functions in the
closed form solution of $(\ref{Area51eLL13KRANIOTIS})$ vanishes and the
generalized hypergeometric functions in discussion reduce as follows$:$%
\begin{align}
F_{1}\left(  \frac{1}{2},\mathbf{\beta}_{\varepsilon},\frac{3}{2}%
,0,1-\frac{c^{2}}{a^{2}}\right)   & =F\left(  \frac{1}{2},\frac{3}{2},\frac
{3}{2},1-\frac{c^{2}}{a^{2}}\right)  ,\\
F_{1}\left(  \frac{1}{2},\mathbf{\beta}_{\varepsilon},\frac{5}{2}%
,0,1-\frac{c^{2}}{a^{2}}\right)   & =F\left(  \frac{1}{2},\frac{3}{2},\frac
{5}{2},1-\frac{c^{2}}{a^{2}}\right)
\end{align}
We now apply the formulae $:$%
\begin{equation}
F\left(  \frac{1}{2},\frac{1}{2},\frac{3}{2},x\right)  =\frac{1}{\sqrt[2]{x}%
}\arcsin\sqrt[2]{x},\label{EIDIKOS}%
\end{equation}%
\begin{align}
F(\alpha,\beta+1,\gamma+1,x)-F(\alpha,\beta,\gamma,x)  & =\left(  \frac
{\gamma-\beta}{\beta\gamma}\right)  x\frac{\mathrm{d}}{\mathrm{d}x}%
F(\alpha,\beta,\gamma+1,x),\nonumber\\
& \\
F(\alpha,\beta+1,\gamma,x)  & =F(\alpha,\beta,\gamma,x)+\frac{x}{\beta}%
\frac{\mathrm{d}}{\mathrm{d}x}F(\alpha,\beta,\gamma,x).\nonumber\\
&
\end{align}
We thus end up with the equations$:$%
\begin{align}
F\left(  \frac{1}{2},\frac{3}{2},\frac{3}{2},x\right)   & =\frac{1}%
{\sqrt[2]{1-x}},\\
F\left(  \frac{1}{2},\frac{3}{2},\frac{5}{2},x\right)   & =\frac{3}{2}%
\frac{\arcsin\sqrt[2]{x}}{x^{3/2}}-\frac{3}{2}\frac{\sqrt[2]{1-x}}%
{x},\label{Contiguous1}%
\end{align}
and the corollary is proved$.$
\end{proof}

\begin{corollary}
\bigskip\bigskip In the special case of an ellipsoid with $a=c<b$ $($prolate
spheroid$)$
\begin{align}
\mathcal{A}_{a=c<b}^{ellipsoid}  & =\frac{8ab}{1}\frac{\pi}{2}F_{1}\left(
\frac{1}{2},-\frac{1}{2},\frac{3}{2},\frac{3}{2},1-\frac{c^{2}}{b^{2}%
},0\right) \nonumber\\
& =4ab\pi F\left(  \frac{1}{2},-\frac{1}{2},\frac{3}{2},1-\frac{c^{2}}{b^{2}%
}\right) \nonumber\\
& =2ab\pi\left\{  \frac{c}{b}+\frac{\arcsin\sqrt[2]{1-c^{2}/b^{2}}}%
{\sqrt[2]{1-c^{2}/b^{2}}}\right\}  .
\end{align}

\end{corollary}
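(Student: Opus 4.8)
The plan is to specialize the closed-form scalene formula of Theorem~\ref{GeorgiosVKraniotis}, equation~(\ref{Area51eLL13KRANIOTIS}), to the degenerate configuration $a=c$ and then collapse the surviving Appell function to an elementary expression. First I would set $a=c$, which forces $\delta=1-c^{2}/a^{2}=0$ while leaving $\varepsilon=1-c^{2}/b^{2}$ unchanged. In the first boxed form of the theorem the factors $1/\sqrt{1-\delta}$ and $[1/(1-\delta)]^{-3/2}$ both become unity, and the whole second Appell term carries the coefficient $(-\delta)/(1-\delta)$, which vanishes. Equivalently, in the second boxed form the coefficient $\tfrac{1}{3}(1-c^{2}/a^{2})$ of the $F_{1}(\ldots,\tfrac{5}{2};\ldots)$ term is zero while $c^{2}/a^{2}=1$. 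Either way a single Appell function remains, with its second variable equal to $\delta=0$, giving the first displayed line $\mathcal{A}=\tfrac{8ab}{1}\tfrac{\pi}{2}F_{1}(\tfrac12,-\tfrac12,\tfrac32,\tfrac32;1-c^{2}/b^{2},0)=4\pi ab\,F_{1}(\tfrac12,-\tfrac12,\tfrac32,\tfrac32;1-c^{2}/b^{2},0)$.

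The second step is to reduce this $F_{1}$ to a Gauss function. Directly from the double-series definition~(\ref{AppellGVKellipsoid}), setting the second variable $y=0$ annihilates every term with $n>0$ and leaves $\sum_{m}\frac{(\alpha,m)(\beta,m)}{(\gamma,m)(1,m)}x^{m}=F(\alpha,\beta,\gamma,x)$; equivalently one may put $y=0$ in the integral representation~(\ref{INTREP}). This produces the second displayed line $\mathcal{A}=4ab\pi\,F(\tfrac12,-\tfrac12,\tfrac32,1-c^{2}/b^{2})$.

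The only genuine computational content is the closed-form evaluation of this Gauss function, which I would draw from the Euler integral. Setting $y=0$, $\alpha=\tfrac12$, $\beta=-\tfrac12$, $\gamma=\tfrac32$ in~(\ref{INTREP}) and using $\gamma-\alpha-1=0$ gives $F(\tfrac12,-\tfrac12,\tfrac32,x)=\tfrac12\int_{0}^{1}u^{-1/2}\sqrt{1-ux}\,\mathrm{d}u$; the substitution $u=v^{2}$ turns this into the elementary integral $\int_{0}^{1}\sqrt{1-xv^{2}}\,\mathrm{d}v$. Evaluating it with $\sqrt{x}\,v=\sin\phi$, so that the upper limit becomes $\arcsin\sqrt{x}$ and $\sin 2\phi=2\sqrt{x}\sqrt{1-x}$ at the endpoint, yields $\tfrac12\bigl(\sqrt{1-x}+\arcsin\sqrt{x}/\sqrt{x}\bigr)$. (One could instead descend from $F(\tfrac12,\tfrac12,\tfrac32,x)=\arcsin\sqrt{x}/\sqrt{x}$, equation~(\ref{EIDIKOS}), via the contiguous relation in $\beta$, but inverting that relation amounts to solving an ODE, so the Euler integral is the cleaner route.)

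Finally I would substitute $x=1-c^{2}/b^{2}$, so that $\sqrt{1-x}=c/b$ and $\sqrt{x}=\sqrt{1-c^{2}/b^{2}}$, and multiply by $4ab\pi$ to reach $2ab\pi\{c/b+\arcsin\sqrt{1-c^{2}/b^{2}}/\sqrt{1-c^{2}/b^{2}}\}$, the claimed third line. I do not expect a real obstacle; the only slightly delicate bookkeeping is verifying that the $\delta\to0$ limit of the prefactors is taken consistently—the apparent $1/\sqrt{1-\delta}$ singularity is spurious since here $\delta=0$ exactly—and that the $F_{1}\to F$ reduction is legitimate at the boundary value $y=0$, which it is because the series~(\ref{AppellGVKellipsoid}) converges termwise there.
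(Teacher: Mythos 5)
Your proposal is correct, and its skeleton coincides with the paper's: both specialize Theorem \ref{GeorgiosVKraniotis} at $\delta=0$ (so the prefactors trivialize and the second Appell term dies with its factor $-\delta$), and both collapse $F_{1}(\alpha,\beta,\beta^{\prime},\gamma,x,0)=F(\alpha,\beta,\gamma,x)$, exactly as the paper's proof states. Where you genuinely diverge is the last step, the evaluation of $F\left(\tfrac12,-\tfrac12,\tfrac32,x\right)$. The paper derives it from the known value $(\ref{EIDIKOS})$, $F\left(\tfrac12,\tfrac12,\tfrac32,x\right)=\arcsin\sqrt{x}/\sqrt{x}$, by applying the $\beta$-\emph{lowering} contiguous relation $F(\alpha,\beta-1,\gamma,x)=\frac{1}{\gamma-\beta}\left[z(1-z)\frac{\mathrm{d}}{\mathrm{d}z}-\alpha z-\beta+\gamma\right]F(\alpha,\beta,\gamma,x)$; note this is a direct differentiation of a known function, not an inversion, so your parenthetical claim that the contiguous route "amounts to solving an ODE" is inaccurate as applied to the paper — it would be true only for the $\beta$-\emph{raising} relations quoted in the proof of the preceding ($a>b=c$) corollary, which is presumably what you had in mind. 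Your alternative — putting $y=0$, $\alpha=\tfrac12$, $\beta=-\tfrac12$, $\gamma=\tfrac32$ in $(\ref{INTREP})$ to get $\tfrac12\int_{0}^{1}u^{-1/2}\sqrt{1-ux}\,\mathrm{d}u=\int_{0}^{1}\sqrt{1-xv^{2}}\,\mathrm{d}v=\tfrac12\left(\sqrt{1-x}+\arcsin\sqrt{x}/\sqrt{x}\right)$ — checks out (the hypotheses $\mathrm{Re}\,\alpha>0$, $\mathrm{Re}(\gamma-\alpha)>0$ hold, and $x=1-c^{2}/b^{2}\in(0,1)$ since $c<b$), and it buys self-containedness: it needs neither $(\ref{EIDIKOS})$ nor any contiguous relation, only elementary calculus, whereas the paper's route reuses machinery already set up for the other corollaries. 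Both evaluations agree, and the substitution $\sqrt{1-x}=c/b$ times the prefactor $4\pi ab$ gives the stated result, so your proof is complete.
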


\begin{proof}
Here we make use of the formulae$:$%
\begin{equation}
F_{1}(\alpha,\beta,\beta^{\prime},\gamma,x,0)=F(\alpha,\beta,\gamma,x),
\end{equation}%
\begin{equation}
F(\alpha,\beta-1,\gamma,x)=\frac{1}{\gamma-\beta}\left[  z(1-z)\frac
{\mathrm{d}}{\mathrm{d}z}-\alpha z-\beta+\gamma\right]  F(\alpha,\beta
,\gamma,x)
\end{equation}
and $(\ref{EIDIKOS}).$
\end{proof}

\begin{corollary}
\bigskip\bigskip In the particular case of the sphere $\ a=b=c$
$(\ref{Area51eLL13KRANIOTIS})$ reduces to the known result $\mathcal{A=}4\pi
a^{2}. $
\end{corollary}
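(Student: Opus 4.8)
The plan is to specialize the closed-form expression of Theorem~\ref{GeorgiosVKraniotis} to $a=b=c$. It is cleanest to start from the second boxed form,
\[
\mathcal{A}_{scalene}^{ellipsoid}=4\pi ab\left(\frac{c^{2}}{a^{2}}F_{1}\Bigl(\tfrac{1}{2},-\tfrac{1}{2},\tfrac{3}{2},\tfrac{3}{2};\epsilon,\delta\Bigr)+\frac{1}{3}\Bigl(1-\frac{c^{2}}{a^{2}}\Bigr)F_{1}\Bigl(\tfrac{1}{2},-\tfrac{1}{2},\tfrac{3}{2},\tfrac{5}{2};\epsilon,\delta\Bigr)\right),
\]
which is equivalent to $(\ref{Area51eLL13KRANIOTIS})$. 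The first step is to record that the defining parameters collapse: since $\delta=1-c^{2}/a^{2}$ and $\varepsilon=1-c^{2}/b^{2}$, imposing $a=c$ and $b=c$ forces $\delta=\varepsilon=0$ simultaneously, while also $c^{2}/a^{2}=1$ and $ab=a^{2}$.

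The second step is to evaluate the two Appell factors at the origin. Reading off the defining double series $(\ref{AppellGVKellipsoid})$, setting both variables to $0$ annihilates every summand except the one with $m=n=0$, whose coefficient is $(\alpha,0)(\beta,0)(\beta^{\prime},0)/[(\gamma,0)(1,0)(1,0)]=1$; hence $F_{1}(\alpha,\beta,\beta^{\prime},\gamma,0,0)=1$ for both parameter tuples occurring above. This is the only analytic ingredient required, and it is immediate from the series, so no genuine obstacle arises at this stage.

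The final step is the substitution. The second Appell term is multiplied by the factor $\tfrac{1}{3}(1-c^{2}/a^{2})=0$ and therefore drops out completely, while the first term contributes $(c^{2}/a^{2})\cdot 1=1$. With the prefactor $4\pi ab=4\pi a^{2}$, the whole right-hand side reduces to $\mathcal{A}=4\pi a^{2}$, which is the asserted value. The only point meriting a word of care is that two distinct limits, $\delta\to 0$ and $\varepsilon\to 0$, are taken at once; but since $F_{1}$ is jointly analytic in a neighbourhood of $(0,0)$ and takes the value $1$ there, the result is independent of the order in which the limits are performed, and the reduction is unambiguous.
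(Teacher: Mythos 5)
Your proof is correct and is essentially the argument the paper intends: the paper states this corollary without an explicit proof, and (exactly as in its parallel sphere corollary for the capacitance) the whole content is that $F_{1}$ evaluated at $(0,0)$ equals $1$, so setting $\delta=\varepsilon=0$, $ab=a^{2}$, $c^{2}/a^{2}=1$ in Theorem~\ref{GeorgiosVKraniotis} yields $\mathcal{A}=4\pi a^{2}$ immediately. Your closing caution about the order of limits is unnecessary, since $(0,0)$ lies in the interior of the region $|x|<1$, $|y|<1$ of absolute convergence of the series $(\ref{AppellGVKellipsoid})$ and the step is a direct evaluation rather than a limit, but it is harmless.
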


Let us give some examples of $(\ref{Area51eLL13KRANIOTIS}).$ For
$a=2,b=1,c=0.25$ using $(\ref{Area51eLL13KRANIOTIS})$ $\ $we compute:
$\mathcal{A}^{ellipsoid}=13.6992108087.$ For $a=1,b=1,c=0.5,$ we calculate
$\mathcal{A}^{ellipsoid}=8.6718827033,$ and \ for $a=1,b=0.8,c=0.625,$ we
compute: $\mathcal{A}^{ellipsoid}=8.1516189229.$

\bigskip

\section{Capacitance of a conducting ellipsoid.}

For a conducting 3-dimensional ellipsoid $\mathbf{E}$ given by equation
(\ref{ellipsoidgeniko}) its electrostatic capacitance is determined by solving
Laplace's equation:%
\begin{equation}
\nabla^{2}\Phi=\frac{\partial^{2}\Phi}{\partial x^{2}}+\frac{\partial^{2}\Phi
}{\partial y^{2}}+\frac{\partial^{2}\Phi}{\partial z^{2}}=0,\label{Laplace}%
\end{equation}
outside $\mathbf{E,}$ subject to $\Phi=1$ on the surface and $\Phi=0$ at
$\infty$. The function $\Phi(\vec{x})$ is the (equilibrium) electrostatic
potential of $\mathbf{E.}$

The computation is facilitated by ellipsoidal coordinates, a trick first
introduced by Jacobi \cite{JACOBI}. A point $\vec{x}$ outside $\mathbf{E}$
determines the parameter $0\leq r=r(\vec{x})<\infty$ by means of%
\begin{equation}
\frac{x^{2}}{a^{2}+r}+\frac{y^{2}}{b^{2}+r}+\frac{z^{2}}{c^{2}+r}=1.
\end{equation}
Now we look at:%
\begin{equation}
\Phi(\vec{x})=\frac{1}{2}\int_{r(\vec{x})}^{\infty}\left[  (a^{2}%
+r)(b^{2}+r)(c^{2}+r)\right]  ^{-1/2}\mathrm{d}r
\end{equation}
outside $\mathbf{E.}$ It satisfies (\ref{Laplace}) and the prescribed
conditions, therefore the capacitance is given by \cite{GJTEE},\cite{Polya}:%
\begin{equation}
C^{-1}(\mathbf{E)=}\frac{1}{2}\int_{0}^{\infty}\frac{\mathrm{d}r}%
{\sqrt[2]{(a^{2}+r)(b^{2}+r)(c^{2}+r)}}\label{ellipticinteC}%
\end{equation}
We will now show the following theorem in which we compute exactly the
elliptic integral in $(\ref{ellipticinteC})$:

\begin{theorem}
\label{XWRITIKOTITA}The closed form solution for the capacitance of a
3-dimensional conducting ellipsoid is given by the expression:%
%TCIMACRO{\TeXButton{CapitanceKraniotis}{\begin{equation}
%\shadowbox{$\displaystyle C({\bf E})=\frac{2a\Gamma(3/2)}{\Gamma
%(1/2)\Gamma(1)}\frac{1}{F_1\left(\frac{1}{2},\frac{1}{2},\frac{1}{2},\frac
%{3}{2},x,y\right)},$}
%\label{CapaciGVKellipsoid}
%\end{equation}} }%
%BeginExpansion
\begin{equation}
\shadowbox{$\displaystyle C({\bf E})=\frac{2a\Gamma(3/2)}{\Gamma
(1/2)\Gamma(1)}\frac{1}{F_1\left(\frac{1}{2},\frac{1}{2},\frac{1}{2},\frac
{3}{2},x,y\right)},$}
\label{CapaciGVKellipsoid}
\end{equation}
%EndExpansion
where $x:=1-\frac{b^{2}}{a^{2}},y:=1-\frac{c^{2}}{a^{2}}$ and we organize the
axes so that: $a>b>c>0.$
\end{theorem}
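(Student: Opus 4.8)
The plan is to reduce the elliptic integral in $(\ref{ellipticinteC})$ to the Euler-type integral representation $(\ref{INTREP})$ of Appell's $F_1$. The target is to match the parameters $\alpha=\tfrac12$, $\gamma=\tfrac32$, $\beta=\beta'=\tfrac12$, for which $(\ref{INTREP})$ reads $\int_0^1 u^{-1/2}(1-ux)^{-1/2}(1-uy)^{-1/2}\,\mathrm{d}u = \frac{\Gamma(1/2)\Gamma(1)}{\Gamma(3/2)} F_1(\tfrac12,\tfrac12,\tfrac12,\tfrac32,x,y)$. So the entire problem is to turn the integrand of $(\ref{ellipticinteC})$ into $u^{-1/2}(1-ux)^{-1/2}(1-uy)^{-1/2}$ up to overall constants.

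First I would introduce the substitution that normalizes the leading factor $a^2+r$, namely $u := a^2/(a^2+r)$, equivalently $r = a^2(1-u)/u$. As $r$ runs over $[0,\infty)$ the new variable $u$ runs over $(0,1]$, and $\mathrm{d}r = -(a^2/u^2)\,\mathrm{d}u$, the orientation flip cancelling the minus sign. Next I would rewrite each of the three factors: one finds $a^2+r = a^2/u$, and then $b^2+r = a^2/u - (a^2-b^2) = (a^2/u)(1-ux)$ and $c^2+r = a^2/u - (a^2-c^2) = (a^2/u)(1-uy)$, exactly with $x=1-b^2/a^2$ and $y=1-c^2/a^2$. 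Hence the product under the square root equals $(a^6/u^3)(1-ux)(1-uy)$, and after inserting the Jacobian and simplifying, the integral collapses to $\frac{1}{a}\int_0^1 u^{-1/2}(1-ux)^{-1/2}(1-uy)^{-1/2}\,\mathrm{d}u$.

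Finally I would invoke $(\ref{INTREP})$ to evaluate this integral as $\frac{1}{a}\,\frac{\Gamma(1/2)\Gamma(1)}{\Gamma(3/2)}\,F_1(\tfrac12,\tfrac12,\tfrac12,\tfrac32,x,y)$. Combined with the prefactor $\tfrac12$ in $(\ref{ellipticinteC})$ this gives $C^{-1}(\mathbf{E}) = \frac{1}{2a}\,\frac{\Gamma(1/2)\Gamma(1)}{\Gamma(3/2)}\,F_1(\tfrac12,\tfrac12,\tfrac12,\tfrac32,x,y)$, and inverting yields the stated closed form.

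There is no serious analytic obstacle here; the only genuine step is guessing the correct substitution $u=a^2/(a^2+r)$, which is dictated by the requirement that the factor carrying the exponent $\alpha-1=-\tfrac12$ become a pure power of $u$ while the remaining two factors acquire the shapes $1-ux$ and $1-uy$. I would also note that the ordering $a>b>c>0$ guarantees $0<x,y<1$, so the Appell double series, and hence the integral representation $(\ref{INTREP})$, converge absolutely; this is what makes the final inversion legitimate.
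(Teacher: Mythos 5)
Your proof is correct and essentially identical to the paper's: your single substitution $u=a^{2}/(a^{2}+r)$ is exactly the composite of the paper's two-step transformation $(\ref{Metasximatismos})$, $1+r/a^{2}=1/x^{2}$ followed by $\xi=x^{2}$, after which both arguments conclude by matching the resulting integral $\frac{1}{2a}\int_{0}^{1}u^{-1/2}(1-ux)^{-1/2}(1-uy)^{-1/2}\,\mathrm{d}u$ to the Euler representation $(\ref{INTREP})$ with $\alpha=\frac{1}{2}$, $\beta=\beta'=\frac{1}{2}$, $\gamma=\frac{3}{2}$. The only cosmetic difference is that the paper proves the statement in the Appendix as the $n=3$ case of the $n$-dimensional Theorem \ref{CondenserNGVK} via Lauricella's $F_{D}$, whereas you treat the three-dimensional case directly.
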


Equivalently, by substituting the values for the gamma function into $($%
%TCIMACRO{\TeXButton{TeX field}{\ref{CapaciGVKellipsoid}}}%
%BeginExpansion
\ref{CapaciGVKellipsoid}%
%EndExpansion
$)$: $\Gamma(3/2)=\frac{\sqrt[2]{\pi}}{2},$ $\Gamma(1/2)=\sqrt[2]{\pi},$ we derive:%

\begin{equation}
C(\mathbf{E)=}\frac{a}{F_{1}\left(  \frac{1}{2},\frac{1}{2},\frac{1}{2}%
,\frac{3}{2},1-\frac{b^{2}}{a^{2}},1-\frac{c^{2}}{a^{2}}\right)
}.\label{capacizeppelin}%
\end{equation}

For the proof of theorem \ref{XWRITIKOTITA} see Appendix.

We also derive the following corollaries of Theorem \ref{XWRITIKOTITA}:

\begin{corollary}
For the special case $a=b>c$ the capacitance of the spheroid is given by%
\begin{equation}
C_{a=b>c}=\frac{2a\Gamma(3/2)}{\Gamma(1/2)}\frac{\sqrt[2]{1-c^{2}/a^{2}}%
}{\arcsin\sqrt[2]{1-c^{2}/a^{2}}}.
\end{equation}

\end{corollary}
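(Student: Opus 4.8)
The plan is to specialize the general closed-form capacitance of Theorem~\ref{XWRITIKOTITA} to the spheroidal configuration $a=b>c$ and then collapse the resulting Appell function to an elementary expression. Recall from (\ref{capacizeppelin}) that
\[
C(\mathbf{E})=\frac{a}{F_{1}\left(\frac{1}{2},\frac{1}{2},\frac{1}{2},\frac{3}{2},1-\frac{b^{2}}{a^{2}},1-\frac{c^{2}}{a^{2}}\right)},
\]
equivalently the form (\ref{CapaciGVKellipsoid}) carrying the explicit Gamma factors. Setting $b=a$ makes the \emph{first} argument of the Appell function vanish, $x=1-b^{2}/a^{2}=0$, while the second argument $y=1-c^{2}/a^{2}$ stays nonzero since $c<a$.

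First I would invoke the series definition (\ref{AppellGVKellipsoid}) to read off the behaviour at $x=0$: every term with $m\geq 1$ carries a factor $x^{m}$, so only the $m=0$ slice survives and the double sum collapses to a single Gauss series,
\[
F_{1}\left(\frac{1}{2},\frac{1}{2},\frac{1}{2},\frac{3}{2},0,y\right)=\sum_{n=0}^{\infty}\frac{(\tfrac{1}{2},n)(\tfrac{1}{2},n)}{(\tfrac{3}{2},n)(1,n)}\,y^{n}=F\left(\frac{1}{2},\frac{1}{2},\frac{3}{2},y\right).
\]
Next I would apply the special value (\ref{EIDIKOS}), namely $F(\tfrac{1}{2},\tfrac{1}{2},\tfrac{3}{2},y)=\arcsin\sqrt[2]{y}/\sqrt[2]{y}$, at $y=1-c^{2}/a^{2}$, so that the denominator of the capacitance becomes $\arcsin\sqrt[2]{1-c^{2}/a^{2}}\,/\,\sqrt[2]{1-c^{2}/a^{2}}$. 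Substituting into (\ref{CapaciGVKellipsoid}) and using $\Gamma(1)=1$ then yields
\[
C_{a=b>c}=\frac{2a\Gamma(3/2)}{\Gamma(1/2)}\frac{\sqrt[2]{1-c^{2}/a^{2}}}{\arcsin\sqrt[2]{1-c^{2}/a^{2}}},
\]
which is precisely the claimed identity.

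There is no genuine obstacle here; the argument is a two-step degeneration of a result already in hand. The only point demanding care is the bookkeeping of which of the two variables is sent to zero and hence which upper hypergeometric parameter ($\beta$ versus $\beta'$) is retained in the surviving $_2F_1$. In this symmetric instance both are equal to $\tfrac{1}{2}$, so a slip in that slot would not alter the final formula, but in the analogous prolate reductions it would matter. As a sanity check I would take the sphere limit $c\to a$: then $y\to 0$, the ratio $\arcsin\sqrt[2]{y}/\sqrt[2]{y}\to 1$, and $C\to 2a\Gamma(3/2)/\Gamma(1/2)=a$, the textbook capacitance of a conducting sphere of radius $a$, confirming the normalization.
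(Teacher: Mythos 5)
Your proposal is correct and follows essentially the same route as the paper: specialize $(\ref{capacizeppelin})$ at $b=a$ so the first Appell variable vanishes, reduce $F_{1}\left(\frac{1}{2},\frac{1}{2},\frac{1}{2},\frac{3}{2},0,1-\frac{c^{2}}{a^{2}}\right)$ to the Gauss function $F\left(\frac{1}{2},\frac{1}{2},\frac{3}{2},1-\frac{c^{2}}{a^{2}}\right)$, and conclude via the special value $(\ref{EIDIKOS})$. Your extra remarks (the series-level justification of the collapse, the $\beta$ versus $\beta'$ bookkeeping, and the sphere-limit sanity check) are sound but do not change the argument.
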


\begin{proof}
For $a=b>c,$ Eq. $(\ref{capacizeppelin})$ reduces to:%
\begin{align}
C  & =\frac{2a\Gamma(3/2)}{\Gamma(1/2)}\frac{1}{F_{1}(\frac{1}{2},\frac{1}%
{2},\frac{1}{2},\frac{3}{2},0,1-\frac{c^{2}}{a^{2}})}\nonumber\\
& =a\frac{1}{F\left(  \frac{1}{2},\frac{1}{2},\frac{3}{2},1-\frac{c^{2}}%
{a^{2}}\right)  }\overset{(\ref{EIDIKOS})}{=}a\frac{\sqrt[2]{1-c^{2}/a^{2}}%
}{\arcsin\sqrt[2]{1-c^{2}/a^{2}}}.
\end{align}

\end{proof}

\begin{corollary}
For $b=c<a,$ the case of a prolate spheroid we derive%
\begin{equation}
C_{b=c<a}=a\frac{2\sqrt[2]{1-c^{2}/a^{2}}}{\log\frac{1+\sqrt[2]{1-c^{2}/a^{2}%
}}{1-\sqrt[2]{1-c^{2}/a^{2}}}}%
\end{equation}

\end{corollary}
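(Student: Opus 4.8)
The plan is to start from the closed-form capacitance formula (\ref{capacizeppelin}) and exploit the fact that imposing $b=c$ collapses the two arguments of Appell's $F_1$ to a common value. First I would substitute $b=c$ into (\ref{capacizeppelin}). With the abbreviations $x:=1-b^{2}/a^{2}$ and $y:=1-c^{2}/a^{2}$ from Theorem \ref{XWRITIKOTITA}, the condition $b=c$ forces $x=y=1-c^{2}/a^{2}=:z$, so both variables of the Appell function coincide and the capacitance becomes $C=a/F_{1}(\tfrac12,\tfrac12,\tfrac12,\tfrac32,z,z)$.

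Next I would invoke the standard reduction that when its two variables are equal, $F_1$ degenerates to a Gau\ss{} hypergeometric function through
\begin{equation}
F_{1}(\alpha,\beta,\beta^{\prime},\gamma,z,z)=F(\alpha,\beta+\beta^{\prime},\gamma,z),
\end{equation}
the very identity already used in the oblate surface-area corollary (the $a=b\neq c$ case). With $\alpha=\tfrac12$, $\beta=\beta^{\prime}=\tfrac12$, $\gamma=\tfrac32$ this yields $F_{1}(\tfrac12,\tfrac12,\tfrac12,\tfrac32,z,z)=F(\tfrac12,1,\tfrac32,z)$, so the denominator is now an ordinary Gau\ss{} function.

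Finally I would evaluate $F(\tfrac12,1,\tfrac32,z)$ in closed form. This is precisely the quantity appearing in (\ref{GVKTOMAE}) (equivalently the Thomae formula (\ref{tomea1}) with $z$ playing the role of $x^{2}$), which gives
\begin{equation}
F\left(\tfrac12,1,\tfrac32,1-\tfrac{c^{2}}{a^{2}}\right)=\frac{1}{2\sqrt{1-c^{2}/a^{2}}}\log\frac{1+\sqrt{1-c^{2}/a^{2}}}{1-\sqrt{1-c^{2}/a^{2}}}.
\end{equation}
Substituting this denominator back into (\ref{capacizeppelin}) and inverting produces the claimed expression for $C_{b=c<a}$, the factor of $2$ emerging from the reciprocal of the leading $\tfrac12$.

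As for the main obstacle: since every ingredient, namely the equal-variable reduction of $F_1$ and the logarithmic evaluation of the resulting Gau\ss{} function, is already established earlier in the excerpt, there is essentially no technical obstacle here. The only points requiring care are verifying that the reduction applies with $\beta+\beta^{\prime}=1$ (so that the hypergeometric parameters match those of (\ref{GVKTOMAE}) rather than some contiguous function) and confirming that the argument $1-c^{2}/a^{2}$ of the logarithm aligns exactly with the already-proved identity, after which the result follows immediately.
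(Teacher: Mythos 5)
Your proposal is correct and follows essentially the same route as the paper's own proof: substitute $b=c$ into (\ref{capacizeppelin}) so the two arguments of $F_{1}$ coincide, apply the equal-variable reduction $F_{1}(\alpha,\beta,\beta',\gamma,z,z)=F(\alpha,\beta+\beta',\gamma,z)$ to obtain $F\left(\frac{1}{2},1,\frac{3}{2},1-\frac{c^{2}}{a^{2}}\right)$, and evaluate it via the Thomae formula (\ref{tomea1}). Your bookkeeping of the factor of $2$ and the parameter check $\beta+\beta'=1$ are both accurate, so nothing is missing.
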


\begin{proof}
For $c=b<a,$ Eq. $(\ref{capacizeppelin})$ reduces to:%
\begin{align}
C  & =\frac{2a\Gamma(3/2)}{\Gamma(1/2)}\frac{1}{F_{1}\left(  \frac{1}{2}%
,\frac{1}{2},\frac{1}{2},\frac{3}{2},1-\frac{c^{2}}{a^{2}},1-\frac{c^{2}%
}{a^{2}}\right)  }\nonumber\\
& =a\frac{1}{F\left(  \frac{1}{2},1,\frac{3}{2},1-\frac{c^{2}}{a^{2}}\right)
}\overset{(\ref{tomea1})}{=}a\frac{2\sqrt[2]{1-c^{2}/a^{2}}}{\log
\frac{1+\sqrt[2]{1-c^{2}/a^{2}}}{1-\sqrt[2]{1-c^{2}/a^{2}}}}%
\end{align}

\end{proof}

\begin{corollary}
For a conducting sphere, $a=b=c,$ and equation $($%
%TCIMACRO{\TeXButton{TeX field}{\ref{CapaciGVKellipsoid}}}%
%BeginExpansion
\ref{CapaciGVKellipsoid}%
%EndExpansion
$)$ reduces to:%
\begin{equation}
C_{sphere}=a.
\end{equation}

\end{corollary}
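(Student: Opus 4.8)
The plan is to specialize the closed-form capacitance (\ref{CapaciGVKellipsoid}) to the degenerate case $a=b=c$ and to verify that the Appell factor collapses so that only the elementary prefactor survives. The first observation is that under $a=b=c$ the two arguments of the Appell function become $x=1-b^{2}/a^{2}=0$ and $y=1-c^{2}/a^{2}=0$; hence the entire problem reduces to evaluating $F_{1}\!\left(\tfrac12,\tfrac12,\tfrac12,\tfrac32,0,0\right)$.

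To compute this value I would return to the defining double series (\ref{AppellGVKellipsoid}). Every term with $m+n\ge 1$ carries a strictly positive power of $x$ or of $y$, so setting $x=y=0$ annihilates all of them and leaves only the $m=n=0$ term, whose coefficient is $\frac{(\alpha,0)(\beta,0)(\beta^{\prime},0)}{(\gamma,0)(1,0)(1,0)}=1$. Therefore $F_{1}\!\left(\tfrac12,\tfrac12,\tfrac12,\tfrac32,0,0\right)=1$, independently of the parameter values.

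Substituting this back into (\ref{CapaciGVKellipsoid}) leaves the capacitance equal to the prefactor $\dfrac{2a\,\Gamma(3/2)}{\Gamma(1/2)\,\Gamma(1)}$. The concluding step is the elementary Gamma evaluation $\Gamma(3/2)=\tfrac{\sqrt{\pi}}{2}$, $\Gamma(1/2)=\sqrt{\pi}$, $\Gamma(1)=1$, which gives $\dfrac{2a\,(\sqrt{\pi}/2)}{\sqrt{\pi}}=a$, so that $C_{sphere}=a$, as asserted.

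There is no real obstacle in this argument: the sole point requiring a word of justification is the vanishing of the Appell series away from the $(0,0)$ term, and that follows immediately from inspection of (\ref{AppellGVKellipsoid}). As an independent check I would also confirm consistency with the two preceding corollaries by letting $c\to a$: in the oblate formula $\frac{\sqrt{1-c^{2}/a^{2}}}{\arcsin\sqrt{1-c^{2}/a^{2}}}\to 1$ and in the prolate formula $\frac{2\sqrt{1-c^{2}/a^{2}}}{\log\frac{1+\sqrt{1-c^{2}/a^{2}}}{1-\sqrt{1-c^{2}/a^{2}}}}\to 1$, both recovering $C=a$.
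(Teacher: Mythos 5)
Your proof is correct and follows essentially the same route as the paper, which simply observes that $F_{1}\left(\frac{1}{2},\frac{1}{2},\frac{1}{2},\frac{3}{2},0,0\right)=1$ when $a=b=c$ so that only the prefactor $\frac{2a\Gamma(3/2)}{\Gamma(1/2)\Gamma(1)}=a$ survives in (\ref{CapaciGVKellipsoid}). Your added details --- the term-by-term inspection of the double series (\ref{AppellGVKellipsoid}) and the limiting consistency checks against the spheroid corollaries --- are sound elaborations of the same one-line argument, not a different method.
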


\begin{proof}
For the case of the sphere the first hypergeometric function of Appell in $($%
%TCIMACRO{\TeXButton{TeX field}{\ref{CapaciGVKellipsoid}}}%
%BeginExpansion
\ref{CapaciGVKellipsoid}%
%EndExpansion
$)$ takes the value $1$.
\end{proof}

\begin{corollary}
For the case of an elliptic disk, $a>b>c=0,$ the capacitance is given in
closed analytic form by:%
\begin{equation}
C_{elliptic\text{ }disk}=\frac{a}{\frac{\pi}{2}F\left(  \frac{1}{2},\frac
{1}{2},1,1-\frac{b^{2}}{a^{2}}\right)  }.
\end{equation}

\end{corollary}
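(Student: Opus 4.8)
The plan is to specialize the general closed-form capacitance of Theorem~\ref{XWRITIKOTITA}, in the shape of equation~(\ref{capacizeppelin}), to the degenerate case $c=0$ and then collapse the resulting Appell function to a Gauss hypergeometric function. Setting $c=0$ in $C(\mathbf{E})=a/F_{1}\!\left(\tfrac{1}{2},\tfrac{1}{2},\tfrac{1}{2},\tfrac{3}{2},1-b^{2}/a^{2},1-c^{2}/a^{2}\right)$ sends the second variable to $y=1-c^{2}/a^{2}=1$, so the whole task reduces to evaluating $F_{1}\!\left(\tfrac{1}{2},\tfrac{1}{2},\tfrac{1}{2},\tfrac{3}{2},1-b^{2}/a^{2},1\right)$.

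For this I would invoke exactly the same reduction identity already exploited in the surface-area computation, namely that when one of the two arguments of $F_{1}$ equals unity the function degenerates to an ordinary Gauss hypergeometric function $F$:
\begin{equation}
F_{1}(\alpha,\beta,\beta^{\prime},\gamma,x,1)=\frac{\Gamma(\gamma)\Gamma(\gamma-\alpha-\beta^{\prime})}{\Gamma(\gamma-\alpha)\Gamma(\gamma-\beta^{\prime})}F\left(\alpha,\beta,\gamma-\beta^{\prime},x\right).
\end{equation}
With $\alpha=\beta=\beta^{\prime}=\tfrac{1}{2}$ and $\gamma=\tfrac{3}{2}$ the shifted arguments are $\gamma-\alpha-\beta^{\prime}=\tfrac{1}{2}$ and $\gamma-\alpha=\gamma-\beta^{\prime}=1$, so the reduced third parameter becomes $\gamma-\beta^{\prime}=1$ and the prefactor is $\Gamma(3/2)\Gamma(1/2)/[\Gamma(1)\Gamma(1)]$. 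Substituting $\Gamma(3/2)=\sqrt{\pi}/2$, $\Gamma(1/2)=\sqrt{\pi}$, $\Gamma(1)=1$ collapses the prefactor to exactly $\pi/2$, giving
\begin{equation}
F_{1}\left(\tfrac{1}{2},\tfrac{1}{2},\tfrac{1}{2},\tfrac{3}{2},1-\tfrac{b^{2}}{a^{2}},1\right)=\frac{\pi}{2}F\left(\tfrac{1}{2},\tfrac{1}{2},1,1-\tfrac{b^{2}}{a^{2}}\right).
\end{equation}
Inserting this denominator back into (\ref{capacizeppelin}) then yields the claimed expression for $C_{elliptic\text{ }disk}$ immediately.

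The one point requiring genuine care, rather than mere bookkeeping, is the legitimacy of the reduction at the boundary value $y=1$: the defining double series (\ref{AppellGVKellipsoid}) converges absolutely only for $|x|<1,\,|y|<1$, so evaluating at $y=1$ lies a priori outside its stated region of convergence. I would justify it by verifying the relevant Gauss-type boundary convergence condition $\operatorname{Re}(\gamma-\alpha-\beta^{\prime})>0$, which here reads $\gamma-\alpha-\beta^{\prime}=\tfrac{1}{2}>0$; this guarantees that the sum over the second index still converges at unity and that the reduction identity holds (equivalently, one takes the Abel limit $y\to 1^{-}$ in the integral representation (\ref{INTREP}), whose integrand stays absolutely integrable on $[0,1]$ precisely because of the same positivity). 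Everything else is the elementary gamma-value arithmetic displayed above, so no further obstacle arises.
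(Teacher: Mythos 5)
Your proposal is correct and takes essentially the same route as the paper: you set $c=0$ in equation~(\ref{capacizeppelin}), so the second Appell variable becomes $y=1$, and apply the reduction $F_{1}(\alpha,\beta,\beta^{\prime},\gamma,x,1)=\frac{\Gamma(\gamma)\Gamma(\gamma-\alpha-\beta^{\prime})}{\Gamma(\gamma-\alpha)\Gamma(\gamma-\beta^{\prime})}F(\alpha,\beta,\gamma-\beta^{\prime},x)$ with prefactor $\Gamma(3/2)\Gamma(1/2)/[\Gamma(1)\Gamma(1)]=\pi/2$, exactly as the paper does. Your added justification of the boundary evaluation at $y=1$ via $\operatorname{Re}(\gamma-\alpha-\beta^{\prime})=\tfrac{1}{2}>0$ and the Abel limit in the integral representation~(\ref{INTREP}) is a point the paper leaves implicit, and is a welcome refinement rather than a deviation.
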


\begin{proof}
Indeed, in the case of an elliptic disk equation $($%
%TCIMACRO{\TeXButton{TeX field}{\ref{CapaciGVKellipsoid}}}%
%BeginExpansion
\ref{CapaciGVKellipsoid}%
%EndExpansion
$)$ $\ $reduces as follows:%
\begin{align}
C  & =\frac{a}{F_{1}\left(  \frac{1}{2},\frac{1}{2},\frac{1}{2},\frac{3}%
{2},1-\frac{b^{2}}{a^{2}},1\right)  }\nonumber\\
& =\frac{a}{\frac{\Gamma(3/2)\Gamma(3/2-1/2-1/2)}{\Gamma(3/2-1/2)\Gamma
(3/2-1/2)}F\left(  \frac{1}{2},\frac{1}{2},\frac{3}{2}-\frac{1}{2}%
,1-\frac{b^{2}}{a^{2}}\right)  }\nonumber\\
& =\frac{a}{\frac{\pi}{2}F\left(  \frac{1}{2},\frac{1}{2},1,1-\frac{b^{2}%
}{a^{2}}\right)  }%
\end{align}

\end{proof}

We now apply our closed form analytic solution $(\ref{capacizeppelin})$ for
computing the capacitance of some ellipsoids. Our results are presented in
Table \ref{CAPELIIP}.%

%TCIMACRO{\TeXButton{B}{\begin{table}[tbp] \centering}}%
%BeginExpansion
\begin{table}[tbp] \centering
%EndExpansion%
\begin{tabular}
[c]{|l|l|l|}\hline
$a=5,b=2,c=1$ & $a=10,b=5,c=4$ & $a=10,b=8,c=5$\\\hline
$C/a=0.50822148949$ & $C/a=0.621383016235$ & $C/a=0.76121621804$\\
&  & \\\hline
\end{tabular}
\caption{Capacitance of some ellipsoids computed from
(\ref{CapaciGVKellipsoid})}\label{CAPELIIP}%
%TCIMACRO{\TeXButton{E}{\end{table}}}%
%BeginExpansion
\end{table}%
%EndExpansion

\bigskip

In Figure
%TCIMACRO{\TeXButton{TeX field}{\ref{GVKCAPELLI}} }%
%BeginExpansion
\ref{GVKCAPELLI}
%EndExpansion
, we plot the capacitance $C(\mathbf{E})/a$ using Eqn.$(\ref{capacizeppelin}%
),$\ of a conducting ellipsoid immersed in $%
%TCIMACRO{\U{211d} }%
%BeginExpansion
\mathbb{R}
%EndExpansion
^{3}$ versus the ratio $c/a$ of the axes for various values of the ratio
$b/a.$

\section{Demagnetizing factors of a magnetized ellipsoid}

The first attempts of calculating the demagnetizing factors \ of the ellipsoid
were made in \cite{Kellogg},\cite{OSBORN}. However, the authors in
\cite{Kellogg},\cite{OSBORN}, only derived expressions in terms of formal
integrals. We now derive the \textit{first } \textbf{closed form analytic
solution} \ for the demagnetizing factors of the magnetized ellipsoid in terms
of Appell's first hypergeometric function $F_{1}$.

The potential in the interior of the magnetized ellipsoid is \ a quadratic
function of $x,y,$ and $z$ \cite{Kellogg}:%
\begin{equation}
\Phi_{int}=-Lx^{2}-My^{2}-Nz^{2}+W,
\end{equation}
where the demagnetizing factors are defined by the integrals:%
\begin{align}
L  & =\pi a_{1}a_{2}a_{3}\kappa\int_{0}^{\infty}\frac{\mathrm{d}u}{(a_{1}%
^{2}+u)\sqrt{(a_{1}^{2}+u)(a_{2}^{2}+u)(a_{3}^{2}+u)}},\label{LDEM1}\\
M  & =\pi a_{1}a_{2}a_{3}\kappa\int_{0}^{\infty}\frac{\mathrm{d}u}{(a_{2}%
^{2}+u)\sqrt{(a_{1}^{2}+u)(a_{2}^{2}+u)(a_{3}^{2}+u)}},\label{mdem2}\\
N  & =\pi a_{1}a_{2}a_{3}\kappa\int_{0}^{\infty}\frac{\mathrm{d}u}{(a_{3}%
^{2}+u)\sqrt{(a_{1}^{2}+u)(a_{2}^{2}+u)(a_{3}^{2}+u)}},\label{ndem3}%
\end{align}
where we have the correspondence:%
\begin{align}
a_{1}  & =a,a_{2}=b,a_{3}=c,\\
a  & >b>c.
\end{align}
and%
\begin{equation}
W=\pi a_{1}a_{2}a_{3}\kappa\int_{0}^{\infty}\frac{\mathrm{d}u}{\sqrt
{(a_{1}^{2}+u)(a_{2}^{2}+u)(a_{3}^{2}+u)}}.
\end{equation}
From Poisson's differential equation%
\begin{equation}
\nabla^{2}\Phi_{int}=-2(L+M+N)=-4\pi\kappa,
\end{equation}
we derive the relationship that the demagnetizing factors satisfy:%
\begin{equation}
L+M+N=2\pi\kappa.
\end{equation}

\begin{theorem}
\label{magnetization}The closed form solution of the demagnetizing factors
$L,M,N$, of the magnetized scalene ellipsoid is the following:%
\begin{align}
L  & =\frac{\pi abc}{a^{3}}\kappa\frac{\Gamma\left(  \frac{3}{2}\right)
\Gamma(1)}{\Gamma\left(  \frac{5}{2}\right)  }F_{1}\left(  \frac{3}{2}%
,\frac{1}{2},\frac{1}{2},\frac{5}{2},1-\frac{b^{2}}{a^{2}},1-\frac{c^{2}%
}{a^{2}}\right)  ,\label{magnitisena}\\
M  & =\frac{\pi abc}{a^{3}}\kappa\frac{\Gamma\left(  \frac{3}{2}\right)
\Gamma(1)}{\Gamma\left(  \frac{5}{2}\right)  }F_{1}\left(  \frac{3}{2}%
,\frac{3}{2},\frac{1}{2},\frac{5}{2},1-\frac{b^{2}}{a^{2}},1-\frac{c^{2}%
}{a^{2}}\right)  ,\label{magnitis2}\\
N  & =\frac{\pi abc}{a^{3}}\kappa\frac{\Gamma\left(  \frac{3}{2}\right)
\Gamma(1)}{\Gamma\left(  \frac{5}{2}\right)  }F_{1}\left(  \frac{3}{2}%
,\frac{1}{2},\frac{3}{2},\frac{5}{2},1-\frac{b^{2}}{a^{2}},1-\frac{c^{2}%
}{a^{2}}\right)  .\label{magnitis3}%
\end{align}

\end{theorem}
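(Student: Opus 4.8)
The plan is to reduce each of the three defining integrals (\ref{LDEM1})--(\ref{ndem3}) to Euler's integral representation (\ref{INTREP}) of Appell's $F_1$ by a single rational substitution, and then to read off the parameters. I would begin with $L$, writing the extra factor $(a_1^2+u)^{-1}$ together with the square root so that the integrand becomes $(a^2+u)^{-3/2}(b^2+u)^{-1/2}(c^2+u)^{-1/2}$ under the correspondence $a_1=a,a_2=b,a_3=c$.

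The key step is the substitution $t = a^2/(a^2+u)$, equivalently $u = a^2(1-t)/t$, which maps $u\in[0,\infty)$ to $t\in[0,1]$, the reversal of orientation being absorbed by the sign of $\mathrm{d}u = -a^2 t^{-2}\,\mathrm{d}t$. Under this map one checks the three factorizations $a^2+u = a^2/t$, $b^2+u = (a^2/t)(1-tx)$ and $c^2+u = (a^2/t)(1-ty)$, with $x := 1-b^2/a^2$ and $y := 1-c^2/a^2$. Collecting all the powers of $t$ and of $a$ --- the factors $t^{-3/2}\cdot t^{-1/2}\cdot t^{-1/2}$ from the denominator combine with $t^{-2}$ from $\mathrm{d}u$ to leave exactly $t^{1/2}$, while the powers of $a$ combine to $a^{-3}$ --- I would obtain $L = \frac{\pi abc\,\kappa}{a^3}\int_0^1 t^{1/2}(1-tx)^{-1/2}(1-ty)^{-1/2}\,\mathrm{d}t$. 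This integrand is precisely the kernel of (\ref{INTREP}) with $\alpha=3/2$, $\beta=1/2$, $\beta'=1/2$ and $\gamma=5/2$; note that the exponent $\gamma-\alpha-1 = 0$, so the $(1-t)$ weight is trivially $1$, which is exactly what makes the match clean. Invoking (\ref{INTREP}) then yields (\ref{magnitisena}).

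For $M$ and $N$ the computation is identical except for which of the three linear factors carries the power $3/2$. Leaving the power $3/2$ on $(b^2+u)$ replaces the exponent $-1/2$ on $(1-tx)$ by $-3/2$, i.e. $\beta=3/2$, giving (\ref{magnitis2}); leaving it on $(c^2+u)$ gives $\beta'=3/2$ and hence (\ref{magnitis3}). In every case the convergence of the $F_1$ series requires $|x|,|y|<1$, which is guaranteed by the ordering $a>b>c>0$ (so that $0<x,y<1$); this is exactly why one must factor out the largest semi-axis $a^2$ rather than $b^2$ or $c^2$.

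I do not expect a genuine obstacle here: once the substitution $t=a^2/(a^2+u)$ is in hand the rest is bookkeeping. The only point requiring care is the exponent accounting --- verifying that the product of the $t$-powers from the three denominator factors and from $\mathrm{d}u$ collapses to the single weight $t^{\alpha-1}=t^{1/2}$ matching $\gamma=\alpha+1=5/2$, so that (\ref{INTREP}) applies with no leftover $(1-t)$ factor --- together with the check that $x,y$ lie in the disk of convergence of the Appell series.
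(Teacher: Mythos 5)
Your proposal is correct and follows essentially the same route as the paper: your single substitution $t=a^{2}/(a^{2}+u)$ is exactly the composition of the paper's two steps ($1+u/a_{1}^{2}=1/x^{2}$ followed by $\xi=x^{2}$), after which both arguments read off the parameters $\alpha=\tfrac{3}{2}$, $\gamma=\tfrac{5}{2}$, $\beta,\beta'\in\{\tfrac{1}{2},\tfrac{3}{2}\}$ from the Euler integral representation (\ref{INTREP}). Your added remarks on the vanishing $(1-t)$ weight and on why the largest semi-axis must be factored out (so that $0<x,y<1$) are consistent with the paper's implicit ordering $a>b>c>0$.
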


\begin{proof}
The demagnetizing factors are defined by the integrals eqns(\ref{LDEM1}%
)-(\ref{ndem3}). We now compute these integrals in closed analytic form. We
apply the transformation:%
\begin{equation}
1+\frac{u}{a_{1}^{2}}=\frac{1}{x^{2}}\Rightarrow\mathrm{d}u=-\frac{2a_{1}^{2}%
}{x^{3}}\mathrm{d}x.\label{Metasximatismos}%
\end{equation}
Consequently:%
\begin{equation}
L=\frac{2\pi a_{1}a_{2}a_{3}\kappa}{a_{1}^{3}}\int_{0}^{1}\frac{\mathrm{d}%
x\text{ }x^{2}}{\sqrt[2]{(1-\mu_{2}x^{2})(1-\mu_{3}x^{2})}},
\end{equation}
where we defined the moduli:%
\begin{equation}
\mu_{2}:=1-\frac{a_{2}^{2}}{a_{1}^{2}},\text{ }\mu_{3}:=1-\frac{a_{3}^{2}%
}{a_{1}^{2}}.
\end{equation}
We now set:%
\begin{equation}
x^{2}=\xi\Rightarrow\mathrm{d}x=\frac{\mathrm{d}\xi}{2\sqrt[2]{\xi}}.
\end{equation}
Thus%
\begin{align}
L  & =\frac{\pi a_{1}a_{2}a_{3}\kappa}{a_{1}^{3}}\int_{0}^{1}\frac
{\mathrm{d}\xi\xi^{1/2}}{\sqrt[2]{(1-\mu_{2}\xi)(1-\mu_{3}\xi)}}\nonumber\\
& \overset{(\ref{INTREP})}{=}\frac{\pi abc\kappa}{a^{3}}\frac{\Gamma
(3/2)\Gamma(1)}{\Gamma(5/2)}F_{1}\left(  \frac{3}{2},\frac{1}{2},\frac{1}%
{2},\frac{5}{2},1-\frac{b^{2}}{a^{2}},1-\frac{c^{2}}{a^{2}}\right)
\end{align}
In a similar way, repeating the previous transformations, we compute
analytically the two other demagnetizing factors $M,N.$ For instance:%
\begin{align}
N  & =\frac{\pi a_{1}a_{2}a_{3}\kappa}{a_{1}^{3}}\int_{0}^{1}\frac
{\mathrm{d}\xi\xi^{1/2}}{(1-\mu_{3}\xi)\sqrt[2]{(1-\mu_{2}\xi)(1-\mu_{3}\xi)}%
}\nonumber\\
& =\frac{\pi bc\kappa}{a^{2}}\frac{\Gamma(3/2)\Gamma(1)}{\Gamma(5/2)}%
F_{1}\left(  \frac{3}{2},\frac{1}{2},\frac{3}{2},\frac{5}{2},1-\frac{b^{2}%
}{a^{2}},1-\frac{c^{2}}{a^{2}}\right)  .
\end{align}

\end{proof}

We now study special cases for the demagnetizing factors of the magnetized ellipsoid.

\begin{corollary}
In the special case $a=b>c$%
\begin{align}
L  & =\pi\frac{c}{a}\kappa\frac{\Gamma\left(  \frac{3}{2}\right)  \Gamma
(1)}{\Gamma\left(  \frac{5}{2}\right)  }F_{1}\left(  \frac{3}{2},\frac{1}%
{2},\frac{1}{2},\frac{5}{2},0,1-\frac{c^{2}}{a^{2}}\right) \nonumber\\
& =\pi\frac{c}{a}\kappa\frac{\Gamma\left(  \frac{3}{2}\right)  \Gamma
(1)}{\Gamma\left(  \frac{5}{2}\right)  }F\left(  \frac{3}{2},\frac{1}{2}%
,\frac{5}{2},1-\frac{c^{2}}{a^{2}}\right)  =\frac{\pi\kappa m^{2}}%
{(m^{2}-1)^{3/2}}\arcsin\frac{\sqrt[2]{m^{2}-1}}{m}-\frac{\pi\kappa}{m^{2}%
-1},\nonumber\\
& \\
M  & =\pi\frac{c}{a}\kappa\frac{\Gamma\left(  \frac{3}{2}\right)  \Gamma
(1)}{\Gamma\left(  \frac{5}{2}\right)  }F\left(  \frac{3}{2},\frac{1}{2}%
,\frac{5}{2},1-\frac{c^{2}}{a^{2}}\right)  =\frac{\pi\kappa m^{2}}%
{(m^{2}-1)^{3/2}}\arcsin\frac{\sqrt[2]{m^{2}-1}}{m}-\frac{\pi\kappa}{m^{2}%
-1},\nonumber\\
& \\
N  & =\pi\frac{c}{a}\kappa\frac{\Gamma\left(  \frac{3}{2}\right)  \Gamma
(1)}{\Gamma\left(  \frac{5}{2}\right)  }F_{1}\left(  \frac{3}{2},\frac{1}%
{2},\frac{3}{2},\frac{5}{2},0,1-\frac{c^{2}}{a^{2}}\right) \nonumber\\
& =\pi\frac{c}{a}\kappa\frac{\Gamma\left(  \frac{3}{2}\right)  \Gamma
(1)}{\Gamma\left(  \frac{5}{2}\right)  }F\left(  \frac{3}{2},\frac{3}{2}%
,\frac{5}{2},1-\frac{c^{2}}{a^{2}}\right)  =\frac{2\pi\kappa m^{2}}{m^{2}%
-1}\left\{  \frac{-\arcsin\frac{\sqrt[2]{m^{2}-1}}{m}}{\sqrt[2]{m^{2}-1}%
}+1\right\} \label{NLIMITAEQB}%
\end{align}
where $m:=a/c.$
\end{corollary}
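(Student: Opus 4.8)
The plan is to specialize the general formulas (\ref{magnitisena})--(\ref{magnitis3}) of Theorem~\ref{magnetization} to $a=b$ and then collapse the resulting Appell functions to Gau\ss{} hypergeometric functions that can be written in closed form. Setting $a=b$ makes the first argument of every $F_1$ equal to $1-b^2/a^2=0$, and the common prefactor simplifies as $\frac{\pi abc}{a^3}=\frac{\pi c}{a}$. These two observations are the whole entry point of the reduction.

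First I would use the vanishing-variable reduction of $F_1$. Since only the $m=0$ terms of the double series (\ref{AppellGVKellipsoid}) survive when the first variable is set to $0$,
\begin{equation}
F_1(\alpha,\beta,\beta^{\prime},\gamma,0,y)=F(\alpha,\beta^{\prime},\gamma,y),
\end{equation}
the mirror image of the identity $F_1(\alpha,\beta,\beta^{\prime},\gamma,x,0)=F(\alpha,\beta,\gamma,x)$ already invoked above. Applying this with $y=1-c^2/a^2$ sends both $L$ and $M$ to the \emph{same} Gau\ss{} function $F(\tfrac32,\tfrac12,\tfrac52,y)$: the parameter $\beta$ attached to the vanished variable (which is $\tfrac12$ for $L$ and $\tfrac32$ for $M$) simply drops out, and this is exactly why the corollary asserts $L=M$. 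The factor $N$ instead becomes $F(\tfrac32,\tfrac32,\tfrac52,y)$. Throughout I set $m=a/c$, so $y=(m^2-1)/m^2$, $\sqrt{y}=\sqrt{m^2-1}/m$ and $\sqrt{1-y}=1/m=c/a$.

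The two Gau\ss{} functions must then be evaluated, and this is the main point of the argument. For $L=M$ no new computation is required: by symmetry of $F$ in its first two parameters, $F(\tfrac32,\tfrac12,\tfrac52,y)=F(\tfrac12,\tfrac32,\tfrac52,y)$, which is precisely the closed form (\ref{Contiguous1}) already established in the prolate-spheroid corollary. The genuinely new ingredient, and the main obstacle, is $F(\tfrac32,\tfrac32,\tfrac52,y)$ needed for $N$. I would obtain it from the differentiation identity $\frac{d}{dx}F(\tfrac12,\tfrac12,\tfrac32,x)=\frac{(1/2)(1/2)}{3/2}F(\tfrac32,\tfrac32,\tfrac52,x)$, i.e. $F(\tfrac32,\tfrac32,\tfrac52,x)=6\frac{d}{dx}F(\tfrac12,\tfrac12,\tfrac32,x)$, and differentiate the closed form (\ref{EIDIKOS}), $F(\tfrac12,\tfrac12,\tfrac32,x)=\arcsin\sqrt{x}/\sqrt{x}$, to get
\begin{equation}
F\left(\tfrac32,\tfrac32,\tfrac52,x\right)=\frac{3}{x\sqrt{1-x}}-\frac{3\arcsin\sqrt{x}}{x^{3/2}}.
\end{equation}
One could equivalently chain the two contiguous relations used in the prolate corollary, but the single differentiation is cleaner.

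Finally I would substitute $y=(m^2-1)/m^2$ into each closed form, use the Gamma ratio $\frac{\Gamma(3/2)\Gamma(1)}{\Gamma(5/2)}=\tfrac23$ and the prefactor $\pi\frac{c}{a}\kappa=\pi\frac{\kappa}{m}$, and simplify with $\sqrt{y}=\sqrt{m^2-1}/m$, $\sqrt{1-y}=1/m$. The routine algebra collapses $L=M$ to $\frac{\pi\kappa m^2}{(m^2-1)^{3/2}}\arcsin\frac{\sqrt{m^2-1}}{m}-\frac{\pi\kappa}{m^2-1}$ and $N$ to $\frac{2\pi\kappa m^2}{m^2-1}\left(1-\frac{\arcsin\frac{\sqrt{m^2-1}}{m}}{\sqrt{m^2-1}}\right)$, matching the claim. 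As a consistency check one verifies that the arcsine terms cancel in $2L+N$, leaving $2L+N=2\pi\kappa$, the sum rule $L+M+N=2\pi\kappa$ derived above from Poisson's equation.
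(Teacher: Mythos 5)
Your proposal is correct, and while its skeleton matches the paper's proof (set $a=b$ in Theorem \ref{magnetization}, so the prefactor becomes $\pi c\kappa/a$ and the first variable of each $F_1$ vanishes, reducing $F_1(\alpha,\beta,\beta',\gamma,0,y)$ to $F(\alpha,\beta',\gamma,y)$, which also explains $L=M$), your evaluation of the two resulting Gau\ss{} functions takes a genuinely shorter route. For $L=M$ the paper does not use your parameter-symmetry observation: it applies the $\alpha$-raising contiguous relation (\ref{contgauss}) to write $\frac{1}{2}F\left(\frac{3}{2},\frac{1}{2},\frac{5}{2},z\right)=\frac{1}{2}F\left(\frac{1}{2},\frac{1}{2},\frac{5}{2},z\right)+z\frac{\mathrm{d}}{\mathrm{d}z}F\left(\frac{1}{2},\frac{1}{2},\frac{5}{2},z\right)$ and must separately supply a closed form for $F\left(\frac{1}{2},\frac{1}{2},\frac{5}{2},z\right)$, whereas you simply invoke the symmetry $F\left(\frac{3}{2},\frac{1}{2},\frac{5}{2},z\right)=F\left(\frac{1}{2},\frac{3}{2},\frac{5}{2},z\right)$ and reuse (\ref{Contiguous1}) with no new computation. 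For $N$ the paper again runs (\ref{contgauss}) on $F\left(\frac{1}{2},\frac{3}{2},\frac{5}{2},z\right)$ with (\ref{Contiguous1}) as input; your single differentiation $F\left(\frac{3}{2},\frac{3}{2},\frac{5}{2},x\right)=6\frac{\mathrm{d}}{\mathrm{d}x}F\left(\frac{1}{2},\frac{1}{2},\frac{3}{2},x\right)$ applied to (\ref{EIDIKOS}) yields the same function, since your expression $\frac{3}{x\sqrt{1-x}}-\frac{3\arcsin\sqrt{x}}{x^{3/2}}$ coincides with the paper's after combining $\frac{3}{2}\frac{\sqrt{1-z}}{z}+\frac{3}{2}\frac{1+z}{z\sqrt{1-z}}=\frac{3}{z\sqrt{1-z}}$; both then give (\ref{NLIMITAEQB}) upon substituting $z=1-1/m^{2}$. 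Your closing verification that the arcsine terms cancel in $2L+N=2\pi\kappa$, consistent with the Poisson sum rule $L+M+N=2\pi\kappa$, is a sound cross-check the paper's proof does not perform. In sum, both routes deliver identical closed forms; yours minimizes contiguous-relation bookkeeping by leaning on results already established, while the paper's has the virtue of uniformly applying the one relation (\ref{contgauss}) across all its special-case corollaries.
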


\begin{proof}
For $a=b>c,$%
\begin{equation}
L=M=\pi\frac{c}{a}\kappa\frac{\Gamma\left(  \frac{3}{2}\right)  \Gamma
(1)}{\Gamma\left(  \frac{5}{2}\right)  }F\left(  \frac{3}{2},\frac{1}{2}%
,\frac{5}{2},1-\frac{c^{2}}{a^{2}}\right)  .
\end{equation}
Using the contiguous relations
\begin{align}
z\frac{\mathrm{d}}{\mathrm{d}z}F(\alpha,\beta,\gamma,z)  & =\alpha\left[
F(\alpha+1,\beta,\gamma,z)-F(\alpha,\beta,\gamma,z)\right]  ,\label{contgauss}%
\\
& \nonumber
\end{align}
and Eq.(\ref{thome2})%
\begin{equation}
\frac{1}{2}F\left(  \frac{3}{2},\frac{1}{2},\frac{5}{2},\overset
{:=z}{\overbrace{1-\frac{c^{2}}{a^{2}}}}\right)  =\frac{1}{2}F\left(  \frac
{1}{2},\frac{1}{2},\frac{5}{2},z\right)  +z\frac{\mathrm{d}}{\mathrm{d}%
z}F\left(  \frac{1}{2},\frac{1}{2},\frac{5}{2},z\right)  ,
\end{equation}
with
\begin{equation}
F\left(  \frac{1}{2},\frac{1}{2},\frac{5}{2},z\right)  =\frac{3}{2}\left[
(1-z)\left\{  -\frac{1}{2z^{3/2}}\arcsin\sqrt[2]{z}+\frac{1}{2z}\frac
{1}{\sqrt[2]{1-z}}\right\}  +\frac{1}{2}\frac{\arcsin\sqrt[2]{z}}{\sqrt[2]{z}%
}\right]  .
\end{equation}
Consequently:%
\begin{equation}
L=M=\frac{\pi\kappa m^{2}}{(m^{2}-1)^{3/2}}\arcsin\frac{\sqrt[2]{m^{2}-1}}%
{m}-\frac{\pi\kappa}{m^{2}-1},
\end{equation}
where $z=1-\frac{c^{2}}{a^{2}}:=1-\frac{1}{m^{2}}$. On the other hand, using the
contiguous relation (\ref{contgauss}) for the calculation of the
$N$-demagnetizing factor, yields:%
\begin{align}
z\frac{\mathrm{d}}{\mathrm{d}z}F(\alpha,\beta,\gamma,z)  & =\alpha\left[
F(\alpha+1,\beta,\gamma,z)-F(\alpha,\beta,\gamma,z)\right]  \Rightarrow\\
\frac{1}{2}F\left(  \frac{3}{2},\frac{3}{2},\frac{5}{2},z\right)   & =\frac
{1}{2}F\left(  \frac{1}{2},\frac{3}{2},\frac{5}{2},z\right)  +z\frac
{\mathrm{d}}{\mathrm{d}z}F\left(  \frac{1}{2},\frac{3}{2},\frac{5}%
{2},z\right)  ,
\end{align}
where the Gau\ss \ function $F\left(  \frac{1}{2},\frac{3}{2},\frac{5}%
{2},z\right)  $ is given by Equation (\ref{Contiguous1}). Consequently,
\begin{equation}
\frac{1}{2}F\left(  \frac{3}{2},\frac{3}{2},\frac{5}{2},z\right)  =-\frac
{6}{4}\frac{\arcsin\sqrt[2]{z}}{z^{3/2}}+\frac{3}{4}\frac{\sqrt[2]{1-z}}%
{z}+\frac{3}{4}\frac{1}{\sqrt[2]{1-z}}\left(  \frac{1}{z}+1\right)  ,
\end{equation}
and $N$ is given by Equation (\ref{NLIMITAEQB}).
\end{proof}

\begin{corollary}
In the special case $b=c<a$
\begin{align}
L  & =\frac{\pi c^{2}}{a^{2}}\frac{\kappa\Gamma\left(  \frac{3}{2}\right)
\Gamma(1)}{\Gamma\left(  \frac{5}{2}\right)  }F\left(  \frac{3}{2},1,\frac
{5}{2},1-\frac{c^{2}}{a^{2}}\right) \nonumber\\
& =\frac{2\pi\kappa}{m^{2}-1}\left\{  \frac{m}{2}\log\frac{m+\sqrt[2]{m^{2}%
-1}}{m-\sqrt[2]{m^{2}-1}}-1\right\}  ,m:=a/c,\\
N  & =\frac{\pi c^{2}}{a^{2}}\frac{\kappa\Gamma\left(  \frac{3}{2}\right)
\Gamma(1)}{\Gamma\left(  \frac{5}{2}\right)  }F\left(  \frac{3}{2},2,\frac
{5}{2},1-\frac{c^{2}}{a^{2}}\right) \nonumber\\
& =\kappa\pi\frac{m}{m^{2}-1}\left[  -\frac{1}{2}\frac{1}{\sqrt[2]{m^{2}-1}%
}\log\frac{m+\sqrt[2]{m^{2}-1}}{m-\sqrt[2]{m^{2}-1}}+m\right]
\end{align}

\end{corollary}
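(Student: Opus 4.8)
The plan is to specialize Theorem~\ref{magnetization} to $b=c$ and then collapse the resulting Appell functions to Gau\ss{} hypergeometric functions that can be evaluated in elementary closed form. First I would set $z:=1-c^{2}/a^{2}=1-b^{2}/a^{2}$, so that for $b=c$ the two arguments of every $F_{1}$ in \eqref{magnitisena}--\eqref{magnitis3} coincide. The decisive observation is that when its two variables are equal the Appell function reduces to a single Gau\ss{} series: grouping the double sum in \eqref{AppellGVKellipsoid} by the total index $k=m+n$ and using the Chu--Vandermonde identity $\sum_{j=0}^{k}\binom{k}{j}(\beta,j)(\beta',k-j)=(\beta+\beta',k)$ gives
\begin{equation*}
F_{1}(\alpha,\beta,\beta',\gamma,z,z)=F(\alpha,\beta+\beta',\gamma,z).
\end{equation*}
For $L$ the pair $(\beta,\beta')=(\tfrac12,\tfrac12)$ becomes $F(\tfrac32,1,\tfrac52,z)$, whereas for both $M$ and $N$ one has $\beta+\beta'=2$, so $M=N=\frac{\pi c^{2}}{a^{2}}\kappa\,\tfrac23\,F(\tfrac32,2,\tfrac52,z)$ (using $\frac{\Gamma(3/2)\Gamma(1)}{\Gamma(5/2)}=\tfrac23$). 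This already explains why the corollary records only $L$ and $N$: the prolate symmetry $M=N$ is automatic from $\beta+\beta'$ being the same for $M$ and $N$.

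To evaluate the two Gau\ss{} functions I would use the one-variable specialization of the integral representation \eqref{INTREP} (put the second Appell variable equal to $0$, so the factor $(1-uy)^{-\beta'}$ drops out). With $\alpha=\tfrac32,\ \gamma=\tfrac52$ one has $\gamma-\alpha-1=0$, the prefactor is $\tfrac23$, and after the substitution $u=w^{2}$ both functions take the common shape
\begin{equation*}
F\!\left(\tfrac32,n,\tfrac52,z\right)=3\int_{0}^{1}\frac{w^{2}}{(1-zw^{2})^{\,n}}\,\mathrm{d}w,\qquad n=1\ (\text{for }L),\quad n=2\ (\text{for }M=N).
\end{equation*}
Each integrand is rational in $w$, and everything reduces to $\int_{0}^{1}(1-zw^{2})^{-1}\mathrm{d}w=\frac{1}{2\sqrt z}\log\frac{1+\sqrt z}{1-\sqrt z}$ plus elementary rational terms; for $n=2$ I would first apply the reduction $\int_{0}^{1}(1-zw^{2})^{-2}\mathrm{d}w=\tfrac12(1-z)^{-1}+\tfrac12\int_{0}^{1}(1-zw^{2})^{-1}\mathrm{d}w$. (Equivalently, $N$ follows in one line from the differentiation rule $\frac{\mathrm{d}}{\mathrm{d}z}F(a,b,c,z)=\frac{ab}{c}F(a+1,b+1,c+1,z)$ applied to the base identity \eqref{tomea1}, since $F(\tfrac32,2,\tfrac52,z)=3\,\frac{\mathrm{d}}{\mathrm{d}z}F(\tfrac12,1,\tfrac32,z)$; and for $L$ one may instead climb from \eqref{tomea1} by raising $\gamma$ with \eqref{thome2} and then $\alpha$ with \eqref{contgauss}.)

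Finally I would substitute $z=(m^{2}-1)/m^{2}$, so that $\sqrt z=\sqrt{m^{2}-1}/m$ and $\log\frac{1+\sqrt z}{1-\sqrt z}=\log\frac{m+\sqrt{m^{2}-1}}{m-\sqrt{m^{2}-1}}$, and collect terms against the prefactor $\frac{\pi c^{2}}{a^{2}}\kappa=\frac{\pi\kappa}{m^{2}}$ to reach the stated closed forms for $L$ and for $M=N$. The main obstacle I anticipate is not conceptual but bookkeeping in the $L$-case: the logarithmic term and the rational term must be combined against the prefactor so that the powers of $(m^{2}-1)$ and the factor $\tfrac23$ land correctly. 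I would therefore use the independent sum rule $L+M+N=2\pi\kappa$ (with $M=N$), derived earlier from Poisson's equation, as a hard consistency check that pins down every coefficient in the final expressions.
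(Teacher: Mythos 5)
Your proposal is correct, and its skeleton coincides with the paper's: both rest on collapsing the equal-argument Appell function via $F_{1}(\alpha,\beta,\beta^{\prime},\gamma,z,z)=F(\alpha,\beta+\beta^{\prime},\gamma,z)$, which the paper invokes silently and you actually prove (grouping by $k=m+n$ and Chu--Vandermonde) --- a worthwhile addition, and it also makes the symmetry $M=N$ transparent from $\beta+\beta^{\prime}=2$. Where you genuinely diverge is the evaluation of the resulting Gau\ss{} functions: the paper (which works out only $N$ explicitly) differentiates Thom\'e's formula $(\ref{tomea1})$ through $F\left(\tfrac{3}{2},2,\tfrac{5}{2},z\right)=3\tfrac{\mathrm{d}}{\mathrm{d}z}F\left(\tfrac{1}{2},1,\tfrac{3}{2},z\right)$ --- precisely the route you mention parenthetically --- whereas your primary route degenerates the representation $(\ref{INTREP})$ to a one-variable Euler integral and computes $3\int_{0}^{1}w^{2}(1-zw^{2})^{-n}\,\mathrm{d}w$ by elementary partial fractions and the standard $n=2$ reduction (which you state correctly). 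The two routes are comparable in length, but yours avoids contiguous-relation bookkeeping and covers $L$ and $N$ uniformly. One substantive payoff of your plan deserves emphasis: your proposed sum-rule check $L+2N=2\pi\kappa$ is not merely cosmetic --- carried out, it reveals that the corollary's displayed closed form for $L$ contains a misprint. Your computation yields
\begin{equation*}
L=\frac{2\pi\kappa}{m^{2}-1}\left\{\frac{m}{2\sqrt{m^{2}-1}}\log\frac{m+\sqrt{m^{2}-1}}{m-\sqrt{m^{2}-1}}-1\right\},
\end{equation*}
with the factor $1/\sqrt{m^{2}-1}$ in the logarithmic coefficient (matching Osborn's classical prolate formula); only with this factor do $L+2N=2\pi\kappa$ and the sphere limit $L\to 2\pi\kappa/3$ as $m\to 1$ hold, while the printed coefficient $\tfrac{m}{2}\log(\cdot)$ violates both. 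So the ``bookkeeping obstacle'' you anticipated in the $L$-case is real, and your consistency check settles it in favor of your derivation; the stated $N$, and the first (hypergeometric) lines of both formulas, are correct as printed.
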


\begin{proof}
For $b=c<a$%
\begin{align}
N  & =\frac{\pi ac^{2}}{a^{3}}\kappa\frac{\Gamma\left(  \frac{3}{2}\right)
\Gamma(1)}{\Gamma\left(  \frac{5}{2}\right)  }F_{1}\left(  \frac{3}{2}%
,\frac{1}{2},\frac{3}{2},\frac{5}{2},1-\frac{c^{2}}{a^{2}},1-\frac{c^{2}%
}{a^{2}}\right) \nonumber\\
& =\frac{\pi ac^{2}}{a^{3}}\kappa\frac{\Gamma\left(  \frac{3}{2}\right)
\Gamma(1)}{\Gamma\left(  \frac{5}{2}\right)  }F\left(  \frac{3}{2},2,\frac
{5}{2},\overset{z:=}{\overbrace{1-\frac{c^{2}}{a^{2}}}}\right)  =\frac{\pi
c^{2}}{a^{2}}\frac{\kappa\Gamma\left(  \frac{3}{2}\right)  \Gamma(1)}%
{\Gamma\left(  \frac{5}{2}\right)  }3\frac{\mathrm{d}}{\mathrm{d}z}F\left(
\frac{1}{2},1,\frac{3}{2},z\right) \nonumber\\
& \overset{(\ref{tomea1})}{=}\frac{\pi c^{2}}{a^{2}}\frac{\kappa\Gamma\left(
\frac{3}{2}\right)  \Gamma(1)}{\Gamma\left(  \frac{5}{2}\right)  }\frac{3}%
{2}\left\{  -\frac{1}{2}\frac{1}{z^{3/2}}\log\frac{1+\sqrt[2]{z}}%
{1-\sqrt[2]{z}}+\frac{1}{z(1-z)}\right\} \nonumber\\
& \overset{z=1-1/m^{2}}{=}\kappa\pi\frac{m}{m^{2}-1}\left[  -\frac{1}{2}%
\frac{1}{\sqrt[2]{m^{2}-1}}\log\frac{m+\sqrt[2]{m^{2}-1}}{m-\sqrt[2]{m^{2}-1}%
}+m\right]
\end{align}

\end{proof}

\bigskip

\bigskip

\begin{corollary}
\label{magsphere}For a magnetized sphere$,$ $a=b=c,$ and the demagnezing
factors are all equal to:%
\begin{equation}
L=M=N=\frac{2\pi\kappa}{3}.
\end{equation}

\end{corollary}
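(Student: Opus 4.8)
The plan is to specialize the closed-form expressions for $L,M,N$ from Theorem \ref{magnetization} to the isotropic case $a=b=c$ and read off the common value. First I would observe that the two arguments of every Appell function appearing in (\ref{magnitisena})--(\ref{magnitis3}) are $1-b^2/a^2$ and $1-c^2/a^2$, both of which vanish identically when $a=b=c$. Hence each of the three hypergeometric factors degenerates to $F_1(\cdot,\cdot,\cdot,\cdot,0,0)$.

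Next I would invoke the series definition (\ref{AppellGVKellipsoid}): with both variables set to zero, every term with $m\geq 1$ or $n\geq 1$ carries a factor $x^m y^n$ that vanishes, so only the $m=n=0$ term survives and
\begin{equation}
F_1\left(\tfrac{3}{2},\tfrac{1}{2},\tfrac{1}{2},\tfrac{5}{2},0,0\right)
=F_1\left(\tfrac{3}{2},\tfrac{3}{2},\tfrac{1}{2},\tfrac{5}{2},0,0\right)
=F_1\left(\tfrac{3}{2},\tfrac{1}{2},\tfrac{3}{2},\tfrac{5}{2},0,0\right)=1.
\end{equation}
At the same time the geometric prefactor collapses: $\pi abc/a^3=\pi$ when $a=b=c$. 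Therefore all three factors reduce to the identical expression $L=M=N=\pi\kappa\,\Gamma(3/2)\Gamma(1)/\Gamma(5/2)$.

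The final step is to evaluate the gamma ratio. Using $\Gamma(5/2)=\tfrac{3}{2}\Gamma(3/2)$ and $\Gamma(1)=1$ gives $\Gamma(3/2)\Gamma(1)/\Gamma(5/2)=2/3$, whence
\begin{equation}
L=M=N=\pi\kappa\cdot\frac{2}{3}=\frac{2\pi\kappa}{3},
\end{equation}
which is the claimed result. As a consistency check one notes that this value respects the identity $L+M+N=2\pi\kappa$ derived from Poisson's equation, since $3\cdot\tfrac{2\pi\kappa}{3}=2\pi\kappa$. There is no genuine obstacle in this corollary: the entire argument is the evaluation of an absolutely convergent double series at the origin together with an elementary gamma-function computation, and the only point requiring minor care is confirming that the volume prefactor $\pi abc/a^3$ indeed simplifies to $\pi$ rather than to some residual ratio of the semi-axes.
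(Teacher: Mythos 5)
Your proposal is correct and follows essentially the same route as the paper, which likewise proves the corollary by noting that the Appell functions in eqns.~(\ref{magnitisena})--(\ref{magnitis3}) take the value $1$ when $a=b=c$. You merely make explicit the details the paper leaves implicit (the vanishing of both arguments in the series (\ref{AppellGVKellipsoid}), the prefactor $\pi abc/a^{3}=\pi$, and the gamma ratio $\Gamma(3/2)\Gamma(1)/\Gamma(5/2)=2/3$), plus a sound consistency check against $L+M+N=2\pi\kappa$.
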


\begin{proof}
In this case, the generalized hypergeometric functions of Appell that appear
in the exact solutions for the demagnetizing factors, eqns$(\ref{magnitisena}%
)-(\ref{magnitis3}),$ take the value $1.$
\end{proof}

\bigskip

\bigskip

\bigskip

\bigskip

\bigskip

\bigskip We now apply our closed form analytic solutions for the demagnetizing
factors, Eqns. $(\ref{magnitisena})-(\ref{magnitis3}),$ in order to compute
these factors for various ellipsoids. \ Our results are summarized in Tables
\ref{LMNEllipsoid} $\footnote{Osborn \cite{OSBORN} gave the value:
$M/4\pi=0.306$ for the particular ellipsoid, second column of Table
\ref{LMNEllipsoid}.},\ref{GVKGOSBORN}$.%

%TCIMACRO{\TeXButton{B}{\begin{table}[tbp] \centering}}%
%BeginExpansion
\begin{table}[tbp] \centering
%EndExpansion%
\begin{tabular}
[c]{|l|l|l|}\hline
$a=3,b=2,c=1$ & $a=4,b=3,c=2$ & $a=10,b=3,c=2$\\\hline
$\frac{L}{4\pi}=0.156300698829271$ & $\frac{L}{4\pi}=0.211265605319304$ &
$\frac{L}{4\pi}=0.0725156494555862$\\
$\frac{M}{4\pi}=0.267154040262005$ & $\frac{M}{4\pi}=0.305006257867421$ &
$\frac{M}{4\pi}=0.366221770806668$\\
$\frac{N}{4\pi}=0.576545260908724$ & $\frac{N}{4\pi}=0.483728136813275$ &
$\frac{N}{4\pi}=0.561262579737746$\\\hline
\end{tabular}
\caption{The demagnetizing factors $L/4\pi,M/4\pi,N/4\pi$ as computed from
the equations of Theorem \ref{magnetization}, for various ellipsoids. }\label{LMNEllipsoid}%
%TCIMACRO{\TeXButton{E}{\end{table}}}%
%BeginExpansion
\end{table}%
%EndExpansion
%

%TCIMACRO{\TeXButton{B}{\begin{table}[tbp] \centering}}%
%BeginExpansion
\begin{table}[tbp] \centering
%EndExpansion%
\begin{tabular}
[c]{|l|l|l|l|l|l|}\hline
& $\frac{c}{a}$ & $\frac{b}{a}$ & $\frac{L}{4\pi}$ & $\frac{M}{4\pi}$ &
$\frac{N}{4\pi}$\\\hline
1.$%
\begin{array}
[c]{c}%
\text{Theorem \ref{magnetization} }\\
\text{Numerical Osborn}%
\end{array}
$ & $0.017452$ & $0.99620$ & $%
\begin{array}
[c]{c}%
0.0133953\\
\end{array}
$ & $%
\begin{array}
[c]{c}%
0.0134714\\
0.013471
\end{array}
$ & $%
\begin{array}
[c]{c}%
0.973133\\
\end{array}
$\\
2.$%
\begin{array}
[c]{c}%
\text{Theorem \ref{magnetization} }\\
\text{Numerical Osborn}%
\end{array}
$ & $0.087156$ & $0.984920$ & $%
\begin{array}
[c]{c}%
0.0613072\\
0.06108
\end{array}
$ & $%
\begin{array}
[c]{c}%
0.062672\\
0.06281
\end{array}
$ & $%
\begin{array}
[c]{c}%
0.876021\\
0.87611
\end{array}
$\\
3.$%
\begin{array}
[c]{c}%
\text{Theorem \ref{magnetization} }\\
\text{Numerical Osborn}%
\end{array}
$ & $0.5$ & $0.98863$ & $%
\begin{array}
[c]{c}%
0.235445\\
0.23555
\end{array}
$ & $%
\begin{array}
[c]{c}%
0.238955\\
0.23885
\end{array}
$ & $%
\begin{array}
[c]{c}%
0.5256\\
0.5256
\end{array}
$\\\hline
4.$%
\begin{array}
[c]{c}%
\text{Theorem \ref{magnetization} }\\
\text{Numerical Osborn}%
\end{array}
$ & $0.087156$ & $1$ & $%
\begin{array}
[c]{c}%
0.0615658\\
0.06154
\end{array}
$ & $%
\begin{array}
[c]{c}%
0.0615658\\
0.06154
\end{array}
$ & $%
\begin{array}
[c]{c}%
0.876868\\
0.87692
\end{array}
$\\\hline
\end{tabular}
\caption{The demagnetizing factors for various values of the ratios
$c/a,b/a$ and a comparison with the numerical results of reference
\cite{OSBORN}. Empty entries in the table means that the author of
\cite{OSBORN} did not provide such values.}\label{GVKGOSBORN}%
%TCIMACRO{\TeXButton{E}{\end{table}}}%
%BeginExpansion
\end{table}%
%EndExpansion

We also plot the demagnetizing factors $L/4\pi,M/4\pi,N/4\pi$ of the
magnetized ellipsoid versus the ratio $c/a,$ for various values of the ratio
$b/a.$ Our results are displayed in Figures \ref{LcdemFac},\ref{MdemFac1}%
,\ref{NdemFac}.

\bigskip%

%TCIMACRO{\FRAME{ftbpFU}{6.3815in}{4.2134in}{0pt}{\Qcb{The capacitance
%$C(\QTR{bf}{E})$ of a conducting ellipsoid immersed in $\U{211d} ^{3}$ versus
%the ratio $c/a$ of the axes for various values of the ratio $b/a.$
%}}{\Qlb{GVKCAPELLI}}{capacitancegvk.eps}%
%{\special{ language "Scientific Word";  type "GRAPHIC";
%maintain-aspect-ratio TRUE;  display "USEDEF";  valid_file "F";
%width 6.3815in;  height 4.2134in;  depth 0pt;  original-width 7.389in;
%original-height 4.8672in;  cropleft "0";  croptop "1";  cropright "1";
%cropbottom "0";
%filename '../../Users/Georgios/Documents/';file-properties "XNPEU";}%
%} }%
%BeginExpansion
\begin{figure}
[ptb]
\begin{center}
\includegraphics[
height=4.2134in,
width=6.3815in
]%
{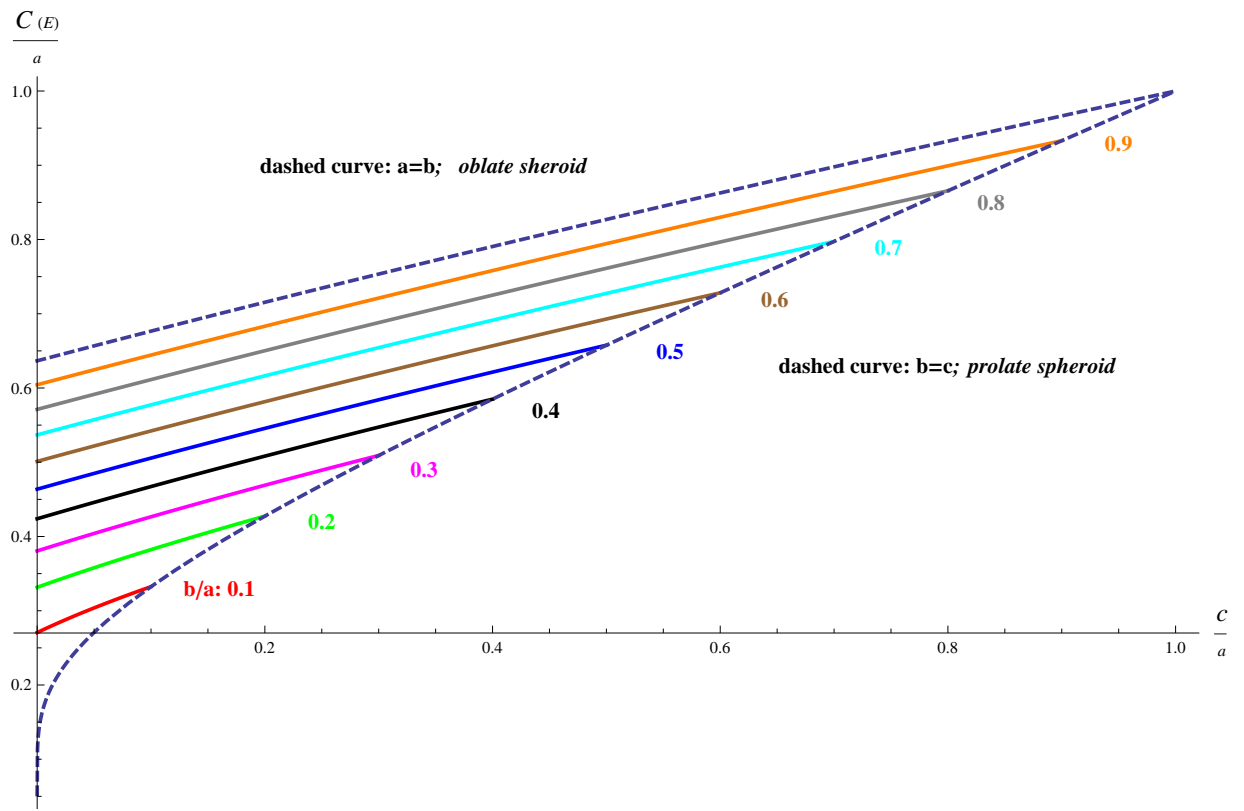}%
\caption{The capacitance $C(\mathbf{E})$ of a conducting ellipsoid immersed in
$\mathbb{R} ^{3}$ versus the ratio $c/a$ of the axes for various values of the
ratio $b/a.$ }%
\label{GVKCAPELLI}%
\end{center}
\end{figure}
%EndExpansion
%

%TCIMACRO{\FRAME{ftbpFU}{7.2644in}{4.7876in}{0pt}{\Qcb{The $L-$ demagnetizing
%factor versus the ratio $c/a$ for various values of the ratio $b/a.$}%
%}{\Qlb{LcdemFac}}{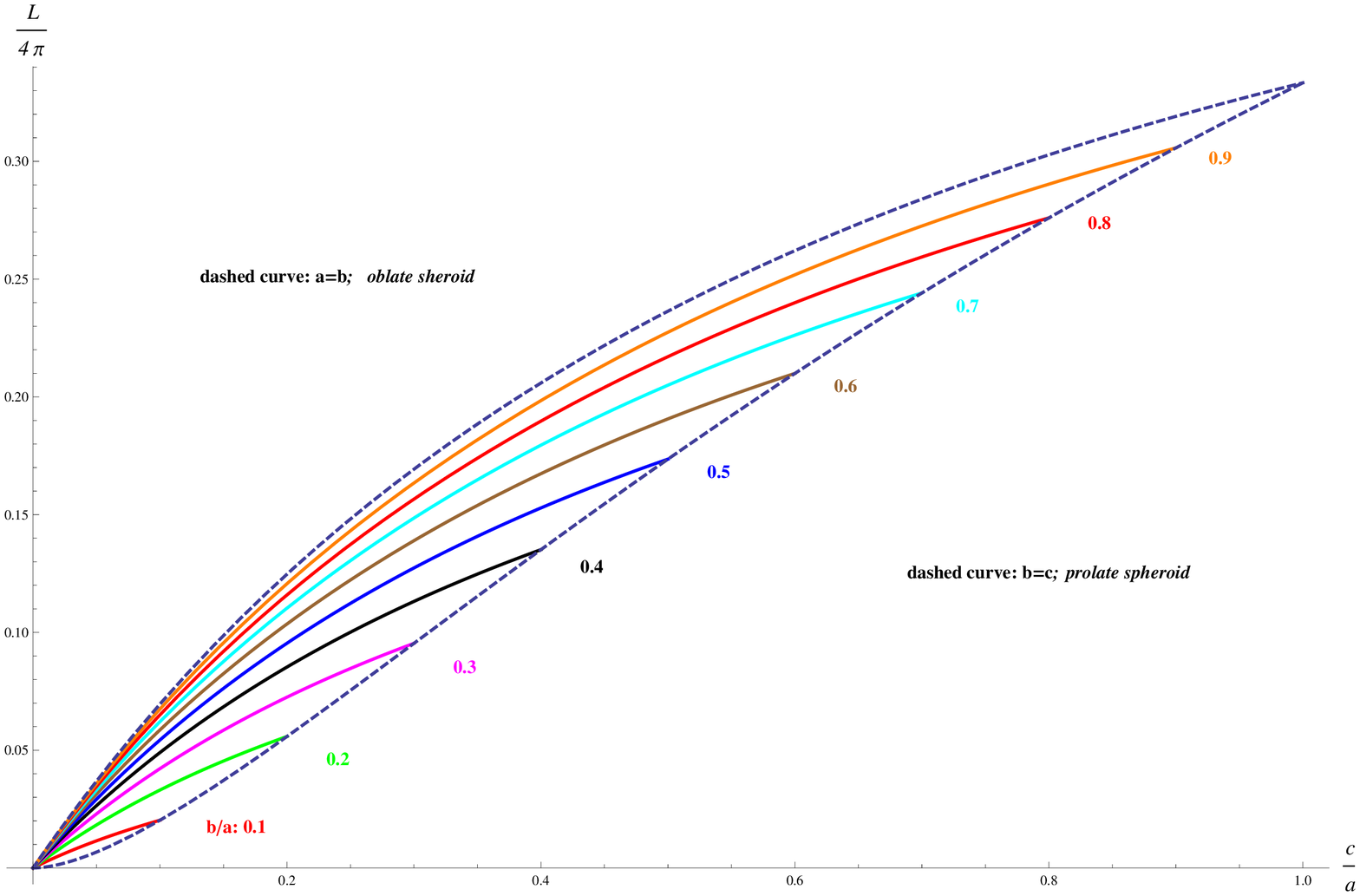}{\special{ language "Scientific Word";
%type "GRAPHIC";  maintain-aspect-ratio TRUE;  display "USEDEF";
%valid_file "F";  width 7.2644in;  height 4.7876in;  depth 0pt;
%original-width 9.7793in;  original-height 6.4333in;  cropleft "0";
%croptop "1";  cropright "1";  cropbottom "0";
%filename '../../Users/Georgios/Documents/dmgl2f.eps';file-properties "XNPEU";}%
%} }%
%BeginExpansion
\begin{figure}
[ptb]
\begin{center}
\includegraphics[
height=3.5907in,
width=5.4483in
]%
{dmgl2f.eps}%
\caption{The $L-$ demagnetizing factor versus the ratio $c/a$ for various
values of the ratio $b/a.$}%
\label{LcdemFac}%
\end{center}
\end{figure}
%EndExpansion
%

%TCIMACRO{\FRAME{ftbpFU}{5.7588in}{3.9254in}{0pt}{\Qcb{The $M-$demagnetizing
%factor versus the ratio $c/a$ for various values of the ratio $b/a.$ The
%dashed curves meet at the point determined in Corollary \ref{magsphere}.}%
%}{\Qlb{MdemFac1}}{}{\special{ language "Scientific Word";
%type "GRAPHIC";  maintain-aspect-ratio TRUE;  display "USEDEF";
%valid_file "F";  width 5.7588in;  height 3.9254in;  depth 0pt;
%original-width 6.6642in;  original-height 4.5316in;  cropleft "0";
%croptop "1";  cropright "1";  cropbottom "0";
%filename '../../Users/Georgios/Documents/mdm.eps';file-properties "XNPEU";}}
%}%
%BeginExpansion
\begin{figure}
[ptb]
\begin{center}
\includegraphics[
height=3.9254in,
width=5.7588in
]%
{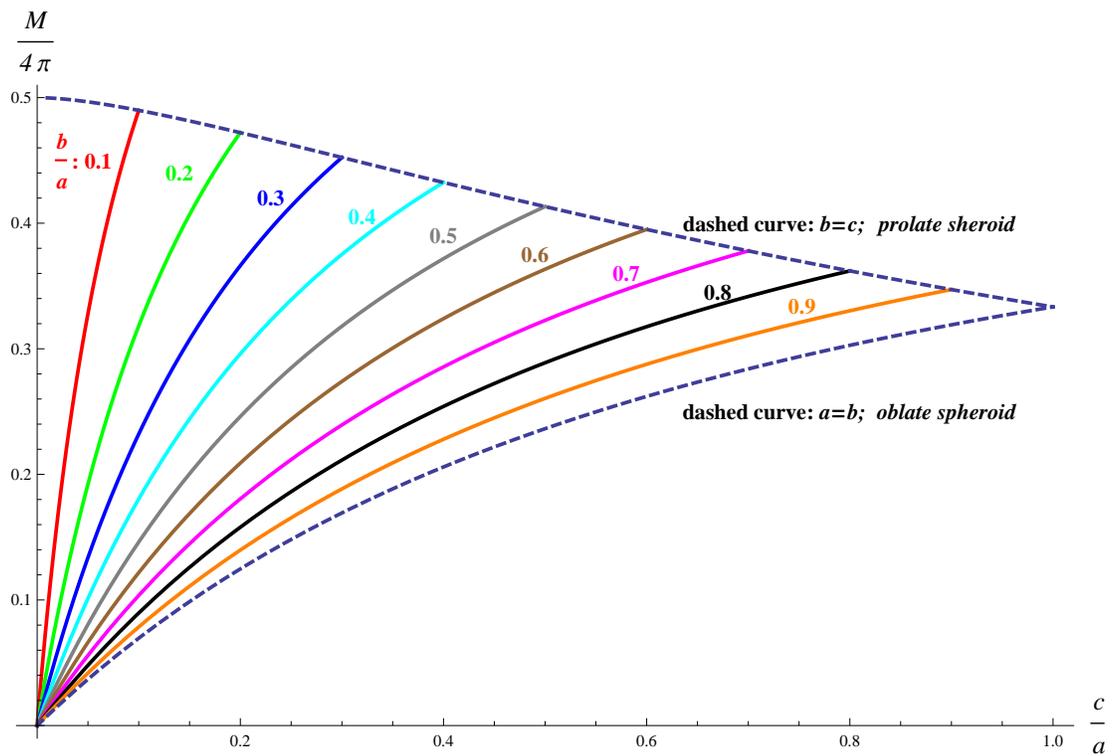}%
\caption{The $M-$demagnetizing factor versus the ratio $c/a$ for various
values of the ratio $b/a.$ The dashed curves meet at the point determined in
Corollary \ref{magsphere}.}%
\label{MdemFac1}%
\end{center}
\end{figure}
%EndExpansion

\bigskip

\bigskip%

%TCIMACRO{\FRAME{ftbpFU}{6.0027in}{3.9435in}{0pt}{\Qcb{The $N$ demagnetizing
%factor versus the ratio $c/a$ for various values of the ratio $b/a.$}%
%}{\Qlb{NdemFac}}{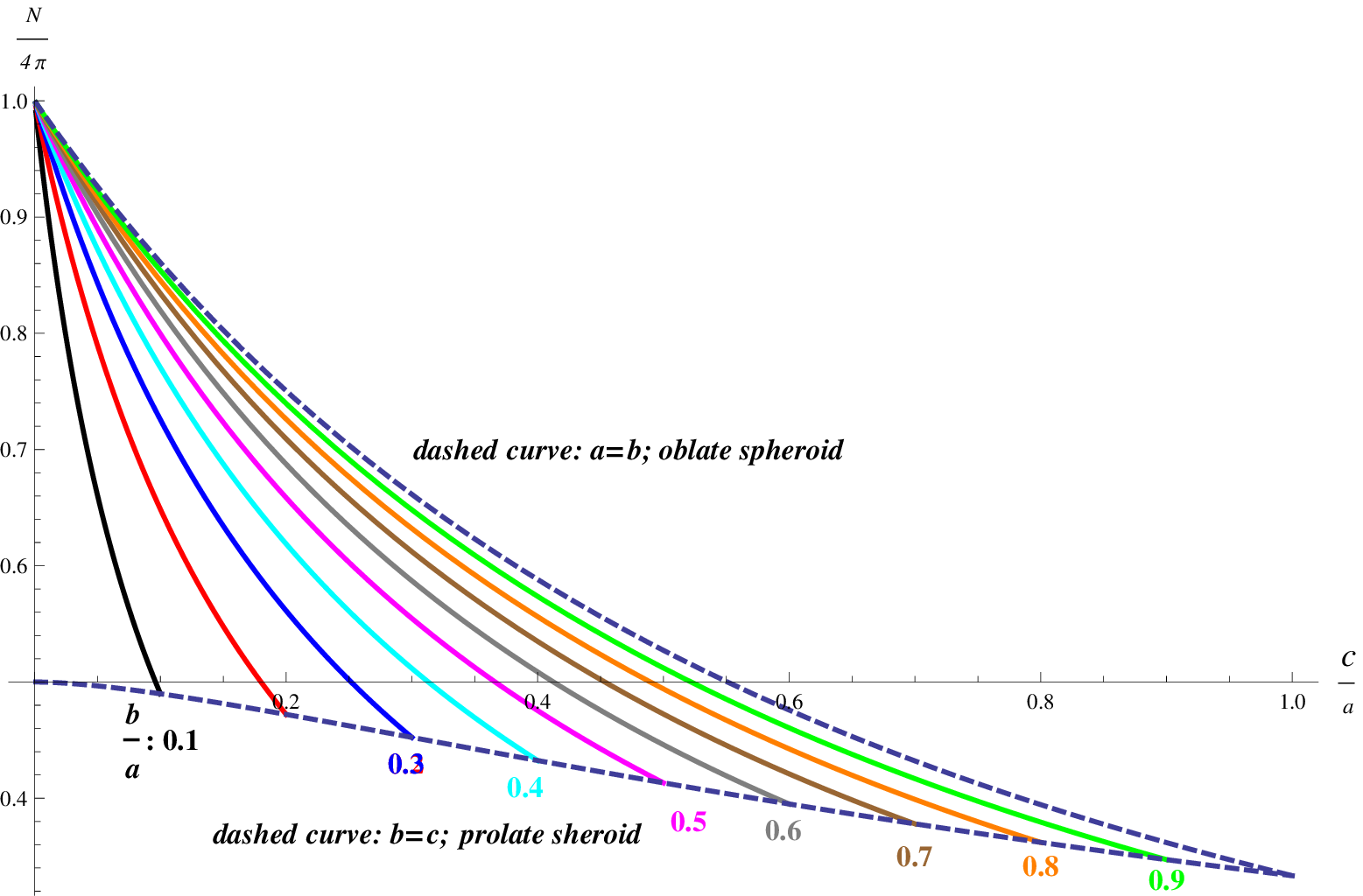}{\special{ language "Scientific Word";
%type "GRAPHIC";  maintain-aspect-ratio TRUE;  display "USEDEF";
%valid_file "F";  width 6.0027in;  height 3.9435in;  depth 0pt;
%original-width 6.2232in;  original-height 4.0785in;  cropleft "0";
%croptop "1";  cropright "1";  cropbottom "0";
%filename '../../Users/Georgios/Documents/ndm.eps';file-properties "XNPEU";}}
%}%
%BeginExpansion
\begin{figure}
[ptb]
\begin{center}
\includegraphics[
height=3.9435in,
width=6.0027in
]%
{ndm.eps}%
\caption{The $N$ demagnetizing factor versus the ratio $c/a$ for various
values of the ratio $b/a.$}%
\label{NdemFac}%
\end{center}
\end{figure}
%EndExpansion

\bigskip Using theorem \ref{magnetization} we can write for the potential
$\Phi_{int}:$%
\begin{equation}
\Phi_{int}=\mathcal{A}\left\{  1-\left(  \frac{x^{2}}{\alpha^{2}}+\frac{y^{2}%
}{\beta^{2}}+\frac{z^{2}}{\gamma^{2}}\right)  \right\}  ,
\end{equation}
where%
\begin{align}
\alpha & :=3\frac{F_{1}\left(  \frac{1}{2},\frac{1}{2},\frac{1}{2},\frac{3}%
{2},1-\frac{b^{2}}{a^{2}},1-\frac{c^{2}}{a^{2}}\right)  }{F_{1}\left(
\frac{3}{2},\frac{1}{2},\frac{1}{2},\frac{5}{2},1-\frac{b^{2}}{a^{2}}%
,1-\frac{c^{2}}{a^{2}}\right)  }a^{2},\\
\beta & :=3\frac{F_{1}\left(  \frac{1}{2},\frac{1}{2},\frac{1}{2},\frac{3}%
{2},1-\frac{b^{2}}{a^{2}},1-\frac{c^{2}}{a^{2}}\right)  }{F_{1}\left(
\frac{3}{2},\frac{3}{2},\frac{1}{2},\frac{5}{2},1-\frac{b^{2}}{a^{2}}%
,1-\frac{c^{2}}{a^{2}}\right)  }b^{2},\\
\gamma & :=3\frac{F_{1}\left(  \frac{1}{2},\frac{1}{2},\frac{1}{2},\frac{3}%
{2},1-\frac{b^{2}}{a^{2}},1-\frac{c^{2}}{a^{2}}\right)  }{F_{1}\left(
\frac{3}{2},\frac{1}{2},\frac{3}{2},\frac{5}{2},1-\frac{b^{2}}{a^{2}}%
,1-\frac{c^{2}}{a^{2}}\right)  }c^{2},\\
\mathcal{A}  & :=2\kappa\pi\frac{b}{a}\frac{c}{a}a^{2}F_{1}\left(  \frac{1}%
{2},\frac{1}{2},\frac{1}{2},\frac{3}{2},1-\frac{b^{2}}{a^{2}},1-\frac{c^{2}%
}{a^{2}}\right)  .
\end{align}

\section{\bigskip Concluding remarks}

We have solved analytically a number of problems related to the geometrical
and physical properties of the theory of ellipsoid.

In particular, we have solved in closed analytic form for the capacitance of a
conducting ellipsoid immersed in the Euclidean space $%
%TCIMACRO{\U{211d} }%
%BeginExpansion
\mathbb{R}
%EndExpansion
^{3}.$ The exact solution has been expressed in terms of Appell's first
hypergeometric function $F_{1}$ and it is given by Theorem \ref{XWRITIKOTITA}
and equation \ $(%
%TCIMACRO{\TeXButton{TeX field}{\ref{CapaciGVKellipsoid}}}%
%BeginExpansion
\ref{CapaciGVKellipsoid}%
%EndExpansion
).$ We have also computed exactly the capacitance of a conducting ellipsoid in
$n-$dimensions. The resulting exact analytic solution is expressed in terms of
Lauricella's fourth hypergeometric function $F_{D}$ of $n-1$-variables, see
Theorem \ref{CondenserNGVK}, equation $(\ref{GVKcapacityND}). $

We subsequently solved analytically for the demagnetizing factors of a
magnetized ellipsoid immersed in the Euclidean space $%
%TCIMACRO{\U{211d} }%
%BeginExpansion
\mathbb{R}
%EndExpansion
^{3}.$ The resulting solutions are expressed elegantly in terms of Appell's
first hypergeometric function $F_{1},$ as stated in Theorem
\ref{magnetization} and equations $(\ref{magnitisena})-(\ref{magnitis3}).$

Finally, we have derived the closed form solution for the geometrical entity
of the surface area of the ellipsoid immersed in the Euclidean space $%
%TCIMACRO{\U{211d} }%
%BeginExpansion
\mathbb{R}
%EndExpansion
^{3}.$ Our analytic solution in this case is given in Theorem
\ref{GeorgiosVKraniotis}, eqn.$(\ref{Area51eLL13KRANIOTIS}).$

We believe that the useful exact analytic theory of the ellipsoid we have
developed in this work will have many applications in various scientific
fields. We have already outlined in the introduction possible
multidisciplinary applications of our theory in science; a scientific
multidisciplinarity which measures from physics, biology \ and chemistry to
micromechanics, space science and astrobiology. \

A fundamental mathematical generalization of our theory would be to
investigate the immersion of an ellipsoid in curved spaces \footnote{Some
initial steps along this direction have been taken in \cite{Krasinski}%
,\cite{Zsigrai}.}and solve for the corresponding geometrical and physical
properties of such an object. However, such a project is beyond the scope of
the present paper and it will be a subject of a future publication.

\bigskip

\appendix

\section{Appendix}

\bigskip The generalization of Theorem \ref{XWRITIKOTITA} for the capacitance
of the ellipsoid in $n$-dimensions involves the analytic computation of a
hyperelliptic integral. Hyperelliptic integrals which are involved in the
solution of timelike and null geodesics in Kerr and Kerr-(anti) de Sitter
black hole spacetimes have been computed analytically in references
\cite{KRANIOTISGTR},\cite{KRANIOTISLensGrav}, in terms of the multivariable
Lauricella's hypergeometric function $F_{D}$. The idea is to bring a
hyperelliptic integral by the appropriate transformations onto the integral
representation that the function $F_{D}$ admits \footnote{For an application
of the method in the realm of number theory see \cite{Giovanni}.}.

Applying this method for the analytic computation of the capacitance of the
ellipsoid in $n$-dimensions, generalizes Theorem \ref{XWRITIKOTITA} to the
following one:

\begin{theorem}
\label{CondenserNGVK}The closed form solution for the capacitance of a
$n-$dimensional conducting ellipsoid is given by the formula:%
\begin{equation}
C=\frac{2a_{1}^{n-2}\Gamma\left(  \frac{n}{2}\right)  }{\Gamma\left(
\frac{n-2}{2}\right)  \Gamma(1)}\frac{1}{F_{D}%
%TCIMACRO{\TeXButton{TeX field}{\left(}}%
%BeginExpansion
\left(%
%EndExpansion
\frac{n-2}{2},\underset{n-1}{\underbrace{\frac{1}{2},\frac{1}{2},\ldots
,\frac{1}{2}}},\frac{n}{2},x_{2},x_{3},\ldots,x_{n}%
%TCIMACRO{\TeXButton{TeX field}{\right)}}%
%BeginExpansion
\right)%
%EndExpansion
}\label{GVKcapacityND}%
\end{equation}
where $F_{D}$ denotes the fourth hypergeometric function of Lauricella of
$n-1$-variables and
\begin{equation}
x_{2}:=1-\frac{a_{2}^{2}}{a_{1}^{2}},x_{3}:=1-\frac{a_{3}^{2}}{a_{1}^{2}%
},\ldots,x_{n}:=1-\frac{a_{n}^{2}}{a_{1}^{2}}.
\end{equation}

\end{theorem}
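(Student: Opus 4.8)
The plan is to imitate the proof of Theorem~\ref{XWRITIKOTITA} line by line, with Appell's $F_1$ replaced by Lauricella's $F_D$, the only genuinely new feature being that the radical $\big[\prod_{j=1}^n(a_j^2+r)\big]^{1/2}$ now carries a polynomial of degree $n$, so for $n\ge 5$ the integral is honestly hyperelliptic and the classical elliptic reduction is unavailable. First I would set up the electrostatics in $\mathbb{R}^n$: solve $\nabla^2\Phi=0$ outside the ellipsoid $\sum_{j=1}^n x_j^2/a_j^2=1$ with $\Phi=1$ on the surface and $\Phi\to 0$ at infinity, via the $n$-dimensional Jacobi coordinate $r=r(\vec x)$ fixed by $\sum_j x_j^2/(a_j^2+r)=1$. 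The candidate potential $\Phi(\vec x)\propto\int_{r(\vec x)}^{\infty}\big[\prod_{j=1}^n(a_j^2+t)\big]^{-1/2}\mathrm{d}t$ is the degree-zero ellipsoidal harmonic of the second kind; a short check (its decay $\sim|\vec x|^{-(n-2)}$ reproduces the $\mathbb{R}^n$ Green's function, since $r\sim|\vec x|^2$ at infinity) confirms it is harmonic and meets the boundary data, so exactly as in \eqref{ellipticinteC} one obtains $C^{-1}=\tfrac12\int_0^\infty\big[\prod_{j=1}^n(a_j^2+r)\big]^{-1/2}\mathrm{d}r$.

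Second, I would evaluate this integral using the very substitutions of Theorem~\ref{magnetization}. Under the transformation~\eqref{Metasximatismos}, $1+u/a_1^2=1/s^2$, one gets $a_1^2+u=a_1^2/s^2$ and $a_j^2+u=a_1^2(1-x_j s^2)/s^2$ for $j\ge2$, with $x_j:=1-a_j^2/a_1^2$ and $\mathrm{d}u=-2a_1^2 s^{-3}\mathrm{d}s$. The radical then becomes $\prod_{j=1}^n(a_j^2+u)^{1/2}=a_1^{n}s^{-n}\prod_{j=2}^n(1-x_j s^2)^{1/2}$, and after reversing the limits and putting $s^2=\xi$ the reciprocal capacitance takes the Euler-type form
\[
C^{-1}=\frac{1}{2a_1^{n-2}}\int_0^1 \xi^{\frac{n-4}{2}}\prod_{j=2}^n(1-x_j\xi)^{-1/2}\,\mathrm{d}\xi.
\]
This is precisely the $(n-1)$-variable generalization of \eqref{INTREP}: with $\alpha=\tfrac{n-2}{2}$ (so $\alpha-1=\tfrac{n-4}{2}$), $\gamma=\tfrac{n}{2}$ (so $\gamma-\alpha-1=0$ kills the $(1-\xi)$ factor), and all $\beta_j=\tfrac12$, the integral equals $\tfrac{\Gamma(\frac{n-2}{2})\Gamma(1)}{\Gamma(n/2)}F_D\big(\tfrac{n-2}{2},\tfrac12,\dots,\tfrac12,\tfrac n2,x_2,\dots,x_n\big)$. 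Inverting gives \eqref{GVKcapacityND}, and the prefactor collapses to that of Theorem~\ref{XWRITIKOTITA} at $n=3$, a useful consistency check since $F_D$ of two variables is $F_1$.

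The main obstacle is not the bookkeeping of the substitutions but justifying that the hyperelliptic integral genuinely lands on the $F_D$ Euler representation for all $n$, i.e.\ transferring the technique used for hyperelliptic geodesic integrals in \cite{KRANIOTISGTR,KRANIOTISLensGrav}. Two points need care. First, the Euler integral for $F_D$ requires $\operatorname{Re}\gamma>\operatorname{Re}\alpha>0$; here $\alpha=\tfrac{n-2}{2}>0$ for $n\ge3$ and $\gamma-\alpha=1>0$, so the representation is valid. Second, the series $F_D$ converges only on the polydisc $|x_j|<1$; ordering the axes as $a_1>a_2>\dots>a_n>0$ forces $x_j=1-a_j^2/a_1^2\in(0,1)$, placing every argument inside the domain of convergence (the boundary values $a_j=a_1$, giving $x_j=0$, merely lower the number of variables, exactly as in the corollaries of Theorem~\ref{XWRITIKOTITA}). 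With these two checks in place the derivation is complete.
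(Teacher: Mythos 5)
Your proposal follows essentially the same route as the paper's Appendix proof: the transformation $1+u/a_1^2=1/x^2$ of Eq.~(\ref{Metasximatismos}), followed by $x^2=\xi$, reduces $C^{-1}=\frac{1}{2}\int_0^\infty\bigl[\prod_{j=1}^n(a_j^2+u)\bigr]^{-1/2}\mathrm{d}u$ to the Euler integral $\frac{1}{2a_1^{n-2}}\int_0^1\xi^{\frac{n-4}{2}}\prod_{j=2}^n(1-x_j\xi)^{-1/2}\mathrm{d}\xi$, which is identified with the $F_D$ representation (\ref{OloklAnapa}) exactly as in the paper, including the $n=3$ consistency check recovering Theorem~\ref{XWRITIKOTITA}. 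Your additions --- the harmonicity and decay verification of the $n$-dimensional ellipsoidal potential justifying the starting integral, and the explicit checks $\operatorname{Re}\alpha=\frac{n-2}{2}>0$, $\gamma-\alpha=1>0$, $x_j\in(0,1)$ --- are details the paper leaves implicit, not a different method.
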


\begin{proof}
Applying the transformation (\ref{Metasximatismos}) to the hyperelliptic
integral
\begin{equation}
\frac{1}{C}=\frac{1}{2}\int_{0}^{\infty}\frac{\mathrm{d}u}{\sqrt{(a_{1}%
^{2}+u)(a_{2}^{2}+u)\cdots(a_{n}^{2}+u)}},
\end{equation}
yields:%
\begin{align}
\frac{1}{C}  & =\frac{1}{a_{1}^{n-2}}\int_{0}^{1}\frac{x^{n-3}\mathrm{d}%
x}{\sqrt[2]{(1-\mu_{2}x^{2})(1-\mu_{3}x^{2})\cdots(1-\mu_{n}x^{2})}%
}\nonumber\\
& \overset{x^{2}=\xi}{=}\frac{1}{2a_{1}^{n-2}}\int_{0}^{1}\frac{\xi
^{\frac{n-3-1}{2}}\mathrm{d}\xi}{\sqrt[2]{(1-\mu_{2}\xi)(1-\mu_{3}\xi
)\cdots(1-\mu_{n}\xi)}}\nonumber\\
& =\frac{1}{2a_{1}^{n-2}}\frac{\Gamma\left(  \frac{n-2}{2}\right)  \Gamma
(1)}{\Gamma\left(  \frac{n}{2}\right)  }F_{D}%
%TCIMACRO{\TeXButton{TeX field}{\left(}}%
%BeginExpansion
\left(%
%EndExpansion
\frac{n-2}{2},\underset{n-1}{\underbrace{\frac{1}{2},\frac{1}{2},\ldots
,\frac{1}{2}}},\frac{n}{2},x_{2},x_{3},\ldots,x_{n}%
%TCIMACRO{\TeXButton{TeX field}{\right)}}%
%BeginExpansion
\right)%
%EndExpansion
\end{align}
where
\begin{equation}
\mu_{j}\equiv x_{j}:=1-\frac{a_{j}^{2}}{a_{1}^{2}},j=1,2,\ldots,n,
\end{equation}
and
\begin{equation}
a_{1}\geq a_{2}\geq\ldots\geq a_{n}>0
\end{equation}

For $n=3$ we derive
\begin{equation}
\frac{1}{C}=\frac{1}{2a_{1}}\frac{\Gamma(1/2)\Gamma(1)}{\Gamma(3/2)}%
F_{1}\left(  \frac{1}{2},\frac{1}{2},\frac{1}{2},\frac{3}{2},1-\frac{b^{2}%
}{a^{2}},1-\frac{c^{2}}{a^{2}}\right)  ,
\end{equation}
and thus Theorem \ref{XWRITIKOTITA} is proved as well.
\end{proof}

Applying our closed form analytic formula, eqn.(\ref{GVKcapacityND}), for
$n=4,$ and the choice of values $a_{1}=2,a_{2}=1+2/3,a_{2}=1+1/3,a_{4}=1$ we
derive for the capacitance of this particular higher dimensional ellipsoid:%
\begin{equation}
C_{(n=4)}=4.406592791665676649174487.
\end{equation}

The generalization of Theorem \ref{magnetization} in $n$-dimensions is the following:

\begin{theorem}%
\begin{align}
\mathcal{L}^{(n)}  & =\frac{\pi a_{1}a_{2}a_{3}\ldots a_{n}}{a_{1}^{n}}%
\frac{\Gamma\left(  \frac{n}{2}\right)  \Gamma(1)}{\Gamma\left(  \frac{n+2}%
{2}\right)  }F_{D}\left(  \frac{n}{2},\underset{n-1}{\underbrace{\frac{1}%
{2},\frac{1}{2},\ldots,\frac{1}{2}}},\frac{n+2}{2},\mu_{2},\ldots,\mu
_{n}\right)  ,\nonumber\\
& \\
\mathcal{M}^{(n)}  & =\frac{\pi a_{1}a_{2}a_{3}\ldots a_{n}}{a_{1}^{n}}%
\frac{\Gamma\left(  \frac{n}{2}\right)  \Gamma(1)}{\Gamma\left(  \frac{n+2}%
{2}\right)  }F_{D}\left(  \frac{n}{2},\frac{3}{2},\frac{1}{2},\ldots,\frac
{1}{2},\frac{n+2}{2},\mu_{2},\ldots,\mu_{n}\right)  ,\nonumber\\
&
\end{align}

\end{theorem}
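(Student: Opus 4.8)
The plan is to mirror the arguments of Theorem \ref{magnetization} and Theorem \ref{CondenserNGVK}, starting from the $n$-dimensional analogues of the integrals (\ref{LDEM1})--(\ref{ndem3}). I would first posit that the demagnetizing factor along the first axis is
$$\mathcal{L}^{(n)}=\pi a_1 a_2\cdots a_n\int_0^\infty\frac{\mathrm{d}u}{(a_1^2+u)\sqrt{(a_1^2+u)(a_2^2+u)\cdots(a_n^2+u)}},$$
with $\mathcal{M}^{(n)}$ obtained by replacing the distinguished factor $(a_1^2+u)$ in the denominator by $(a_2^2+u)$, exactly as $M$ arises from $L$ in the three-dimensional case.

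First I would apply the transformation (\ref{Metasximatismos}), $1+u/a_1^2=1/x^2$, under which each factor becomes $a_j^2+u=(a_1^2/x^2)(1-\mu_j x^2)$ with $\mu_j:=1-a_j^2/a_1^2$ (so $\mu_1=0$), while $\mathrm{d}u=-2a_1^2 x^{-3}\mathrm{d}x$ and the limits reverse to $x:1\to 0$. Collecting the powers of $a_1$ and $x$ — the $n$ square-root factors contribute $a_1^{n}/x^{n}$, the extra factor $(a_1^2+u)$ contributes $a_1^2/x^2$, and the Jacobian supplies $x^{-3}$ — the integrand collapses to $2a_1^{-n}x^{n-1}\prod_{j=2}^n(1-\mu_j x^2)^{-1/2}\,\mathrm{d}x$. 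Setting $x^2=\xi$ then yields
$$\mathcal{L}^{(n)}=\frac{\pi a_1\cdots a_n}{a_1^n}\int_0^1\frac{\xi^{n/2-1}\,\mathrm{d}\xi}{\sqrt{(1-\mu_2\xi)\cdots(1-\mu_n\xi)}}.$$

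The last step is to recognise this as an instance of the integral representation that $F_D$ admits — the Lauricella generalisation of (\ref{INTREP}), namely $\int_0^1 \xi^{\alpha-1}(1-\xi)^{\gamma-\alpha-1}\prod_{j}(1-x_j\xi)^{-\beta_j}\,\mathrm{d}\xi=\frac{\Gamma(\alpha)\Gamma(\gamma-\alpha)}{\Gamma(\gamma)}F_D(\alpha,\{\beta_j\},\gamma,\{x_j\})$ — already invoked in Theorem \ref{CondenserNGVK}. Matching exponents forces $\alpha=n/2$, and since no $(1-\xi)$ factor is present we need $\gamma-\alpha-1=0$, i.e. $\gamma=(n+2)/2$, whence $\Gamma(\gamma-\alpha)=\Gamma(1)$; each of the $n-1$ moduli carries $\beta_j=1/2$. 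This reproduces $\mathcal{L}^{(n)}$ verbatim. For $\mathcal{M}^{(n)}$ the only change is that the surviving denominator factor $(a_2^2+u)^{-1}$ becomes $(1-\mu_2\xi)^{-1}$ after the substitutions, which merges with the square-root factor $(1-\mu_2\xi)^{-1/2}$ to raise the first parameter to $\beta_2=3/2$ while leaving all others at $1/2$; the analogous factor for any other axis follows identically by distinguishing the appropriate denominator.

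I expect the only genuine obstacle to be the bookkeeping of the exponents of $a_1$ and $x$ under the transformation — an off-by-one here would spoil the prefactor $a_1^{-n}$ and the parameter $\alpha=n/2$ — together with confirming that the Lauricella integral representation is valid on the required parameter range ($\operatorname{Re}\alpha>0$ and $\operatorname{Re}(\gamma-\alpha)>0$, both clear here) and that the moduli $\mu_j\in(0,1)$ keep the series within its domain of convergence. Everything else is a direct transcription of the three-dimensional proof.
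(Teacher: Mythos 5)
Your proposal is correct and follows exactly the route the paper intends: the paper states this theorem without an explicit proof, presenting it as the $n$-dimensional generalization obtained by the same method as Theorems \ref{magnetization} and \ref{CondenserNGVK}, and your argument is precisely that transcription — the substitution (\ref{Metasximatismos}) with $\mu_j=1-a_j^2/a_1^2$, then $x^2=\xi$, then the Lauricella integral representation (\ref{OloklAnapa}) with $\alpha=n/2$, $\gamma=(n+2)/2$, and $\beta_2$ promoted to $3/2$ for $\mathcal{M}^{(n)}$. Your exponent bookkeeping checks out (the integrand does reduce to $2a_1^{-n}x^{n-1}\prod_{j=2}^n(1-\mu_j x^2)^{-1/2}$, giving $\xi^{n/2-1}$), and for $n=3$ your formulas reproduce the proven three-dimensional case.
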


\bigskip

\bigskip The fourth hypergeometric function of Lauricella $F_{D}$ of
$m$-variables \cite{GLAURICELLA} is defined as follows:%

%TCIMACRO{\TeXButton{TeX field}{\begin{equation}
%\fbox{$\displaystyle F_D(\alpha,\mbox{\boldmath${\beta}$},\gamma,{\bf
%z})\equiv F_D^{(m)}(\alpha,\mbox{\boldmath${\beta}$},\gamma,{\bf z}):=
%\sum_{n_1,n_2,\dots,n_m=0}^{\infty}\frac{(\alpha)_{n_1+\cdots n_m}%
%(\beta_1)_{n_1} \cdots(\beta_m)_{n_m}}
%{(\gamma)_{n_1+\cdots+n_m}(1)_{n_1}\cdots(1)_{n_m}} z_1^{n_1}\cdots
%z_m^{n_m}$}
%\label{GLauri}
%\end{equation}}}%
%BeginExpansion
\begin{equation}
\fbox{$\displaystyle F_D(\alpha,\mbox{\boldmath${\beta}$},\gamma,{\bf
z})\equiv F_D^{(m)}(\alpha,\mbox{\boldmath${\beta}$},\gamma,{\bf z}):=
\sum_{n_1,n_2,\dots,n_m=0}^{\infty}\frac{(\alpha)_{n_1+\cdots n_m}%
(\beta_1)_{n_1} \cdots(\beta_m)_{n_m}}
{(\gamma)_{n_1+\cdots+n_m}(1)_{n_1}\cdots(1)_{n_m}} z_1^{n_1}\cdots
z_m^{n_m}$}
\label{GLauri}
\end{equation}%
%EndExpansion

where%

%TCIMACRO{\TeXButton{TeX field}{\begin{eqnarray}
%\mathbf{z} &=&(z_{1},\ldots,z_{m}),  \notag\\
%\mbox{\boldmath${\beta}$} &=&(\beta_{1},\ldots,\beta_{m}).
%\end{eqnarray}
%}}%
%BeginExpansion
\begin{eqnarray}
\mathbf{z} &=&(z_{1},\ldots,z_{m}),  \notag\\
\mbox{\boldmath${\beta}$} &=&(\beta_{1},\ldots,\beta_{m}).
\end{eqnarray}
%EndExpansion
The Pochhammer symbol
%TCIMACRO{\TeXButton{TeX field}{\fbox{$\displaystyle(\alpha)_m=(\alpha,m)$}} }%
%BeginExpansion
\fbox{$\displaystyle(\alpha)_m=(\alpha,m)$}
%EndExpansion
is defined by%

%TCIMACRO{\TeXButton{TeX field}{\begin{equation}
%(\alpha)_{m}=\frac{\Gamma(\alpha+m)}{\Gamma(\alpha)}=\left\{
%\begin{array}{ccc}
%1, &
%{\rm if}
%& m=0 \\
%\alpha(\alpha+1)\cdots(\alpha+m-1) & \text{{\rm if}} & m=1,2,3\end{array}%
%\right.
%\end{equation}
%}}%
%BeginExpansion
\begin{equation}
(\alpha)_{m}=\frac{\Gamma(\alpha+m)}{\Gamma(\alpha)}=\left\{
\begin{array}{ccc}
1, &
{\rm if}
& m=0 \\
\alpha(\alpha+1)\cdots(\alpha+m-1) & \text{{\rm if}} & m=1,2,3\end{array}%
\right.
\end{equation}
%EndExpansion

The series admits the following integral representation:%
%TCIMACRO{\TeXButton{TeX field}{\begin{equation}
%\fbox{$\displaystyle F_D(\alpha,\mbox{\boldmath${\beta}$},\gamma,{\bf z})=
%\frac{\Gamma(\gamma)}{\Gamma(\alpha)\Gamma(\gamma-\alpha)}
%\int_0^1 t^{\alpha-1}(1-t)^{\gamma-\alpha-1}(1-z_1 t)^{-\beta_1}%
%\cdots(1-z_m t)^{-\beta_m} {\rm d}t $}
%\label{OloklAnapa}
%\end{equation}} }%
%BeginExpansion
\begin{equation}
\fbox{$\displaystyle F_D(\alpha,\mbox{\boldmath${\beta}$},\gamma,{\bf z})=
\frac{\Gamma(\gamma)}{\Gamma(\alpha)\Gamma(\gamma-\alpha)}
\int_0^1 t^{\alpha-1}(1-t)^{\gamma-\alpha-1}(1-z_1 t)^{-\beta_1}%
\cdots(1-z_m t)^{-\beta_m} {\rm d}t $}
\label{OloklAnapa}
\end{equation}
%EndExpansion
which is valid for
%TCIMACRO{\TeXButton{TeX field}{\fbox{$\displaystyle{\rm Re}(\alpha)>0,\;{\rm
%Re}(\gamma-\alpha)>0. $}}}%
%BeginExpansion
\fbox{$\displaystyle{\rm Re}(\alpha)>0,\;{\rm Re}(\gamma-\alpha)>0. $}%
%EndExpansion
. It converges absolutely inside the $m$-dimensional cuboid%

%TCIMACRO{\TeXButton{TeX field}{\begin{equation}
%|z_{j}|<1,(j=1,\ldots,m).
%\end{equation}
%}}%
%BeginExpansion
\begin{equation}
|z_{j}|<1,(j=1,\ldots,m).
\end{equation}
%EndExpansion

It also has the following values:%
\begin{align}
& F_{D}^{(n)}(\alpha,\beta_{1},\ldots,\beta_{n},\gamma,1,x_{2},\ldots
,x_{n})\nonumber\\
& =\frac{\Gamma(\gamma)\Gamma(\gamma-\alpha-\beta_{1})}{\Gamma(\gamma
-\alpha)\Gamma(\gamma-\beta_{1})}F_{D}^{(n-1)}(\alpha,\beta_{2},\ldots
,\beta_{n},\gamma-\beta_{1},x_{2},\ldots,x_{n}),
\end{align}
when $\max\{|x_{2}|,\ldots,|x_{n}|\}<1,$ $\operatorname{Re}(\gamma
-\alpha-\beta_{1})>0.$

\end{document}